\newtheorem{theorem}{Theorem}[section]
\newtheorem{lemma}[theorem]{Lemma}
\newcommand{\stol}{\Sigma_{tol}}
\newcommand{\spro}{\Sigma_{pro}}
\newcommand{\dv}{\theta}
\newcommand{\DV}{\Theta}
\def\BibTeX{{\rm B\kern-.05em{\sc i\kern-.025em b}\kern-.08em
    T\kern-.1667em\lower.7ex\hbox{E}\kern-.125emX}}
\begin{document}
\title{Constructing Level Sets Using Smoothed Approximate Bayesian Computation}
\author{David Edwards, Julie Bessac, Franck Cappello, Scotland Leman
\thanks{This material is based upon work support by the U.S. Department of Energy, Office of Science, Office of Advanced Scientific Computing Research (ASCR) under award number ERW6344}}

\markboth{IEEE Systems Journal ~16, May~2024}%
{How to Use the IEEEtran \LaTeX \ Templates}

\maketitle

\begin{abstract}
This paper presents a novel approach to level set estimation for any function/simulation with an arbitrary number of continuous inputs and arbitrary numbers of continuous responses. We present a method that uses existing data from computer model simulations to fit a Gaussian process surrogate and use a newly proposed Markov Chain Monte Carlo  technique, which we refer to as Smoothed Approximate Bayesian Computation to sample sets of parameters that yield a desired response, which improves on ``hard-clipped" versions of ABC. We prove that our method converges to the correct distribution (i.e. the posterior distribution of level sets, or probability contours) and give results of our method on known functions and a dam breach simulation where the relationship between input parameters and responses of interest is unknown. Two versions of S-ABC are offered based on: 1) surrogating an accurately known target model and 2) surrogating an approximate model, which leads to uncertainty in estimating the level sets. In addition, we show how our method can be extended to multiple responses with an accompanying example. As demonstrated, S-ABC is able to estimate a level set accurately without the use of a predefined grid or signed distance function.
\end{abstract}

\begin{IEEEkeywords}
Approximate Bayesian Computation, Markov Chain Monte Carlo, Surrogates, Simulation, Input Specification.
\end{IEEEkeywords}

\section{Introduction}\label{sec:intro}
The terms ``level set methods" or ``level set estimation" have different meanings depending on the context and field of study.  In the statistical and machine learning communities, it has come to mean taking noisy observations of a black-box function and using a surrogate, typically either a Gaussian Pprocess (GP) or Bayesian neural network (BNN), to estimate two sets of input parameters, one in which the response of interest is greater than or equal to a threshold and the complement of this set.  In the applied mathematical and physics context, it traditionally means finding and estimating the isocontour of a function using a grid on the input space and moving the isocontour based on either an externally generated velocity field or a velocity field defined by the isocontour.  Both methods have one thing in common: finding and representing the boundary in the input space of a function based on a defined threshold.  The algorithm described in this paper adapts a stochastic algorithm known as Approximate Bayesian Computation (ABC) within Markov Chain Monte Carlo (MCMC) in order to solve this boundary issue.  

This endeavor was motivated by the challenge of addressing the input specification quandary for an arbitrary simulation. Specifically, given a simulation characterized by a vector of input parameters $\boldsymbol{\DV}$, and a continuous response $y$, our objective was to develop a methodology capable of identifying potential input configurations that would produce a specified response value from the simulation. To demonstrate the effectiveness of our approach, we use the DSS-Wise flooding simulation model \citep{altinakar2012parallelized}, which is developed and sustained by the National Center for Computational Hydroscience and Engineering at the University of Mississippi and is operational under the auspices of the Federal Emergency Management Agency (FEMA). This model simulates the ramifications of a dam breach contingent on diverse input parameters. Our aim was to determine configurations of these parameters that would engender a predetermined outcome from the simulation. The details of this simulation are delineated in Section \ref{ss:DSS-Wise}.


\textbf{Contributions:} The principal contributions of this manuscript are as follows:
\begin{itemize}
    \item Tailoring the Approximate Bayesian Computation framework to address the challenges of input specification and level set estimation.
    \item Developing the Smoothed Approximate Bayesian Computation (S-ABC) as an enhancement to the traditional Approximate Bayesian Computation within the Metropolis-Hastings Algorithm. S-ABC refines the density space for accepted parameter configurations, facilitating a Markov Chain Monte Carlo adaptation of ABC that circumvents the pitfalls associated with the conventional hard threshold method of parameter acceptance.
    \item Validating that our algorithm achieves convergence to the target-level set distribution and demonstrating our approach through applications on both synthetic and empirical datasets.
\end{itemize}

The structure of the remainder of this document is as follows: The continuation of Section \ref{sec:intro} is dedicated to a detailed comparative analysis of the relevant literature and a comprehensive description of the DSS-Wise simulation. Section \ref{sec:background} provides the foundational concepts of level sets, the Metropolis-Hastings algorithm (M-H), and Approximate Bayesian Computation. Section \ref{sec:theory} establishes the convergence of our MCMC algorithm to the level set distribution for the elicited response. Section \ref{sec:real} delineates the application of our methodology to various Gaussian processes modeled on disparate responses of the DSS-Wise framework. Section \ref{sec:sim} discusses simulation studies of our technique on established functions to provide visualizations and analyses of the level set MCMC algorithm's performance. Finally, Section \ref{sec:conclus} summarizes the conclusions and proposes avenues for future research.

\subsection{Related Work}\label{ss:Related Work}

As previously mentioned, this work has varied applications across the domains of statistics and machine learning and is indispensable for those seeking to estimate the level set of a complicated (i.e. skewed, multi-modal, high-dimensional) function. In the following paragraphs of this subsection, we highlight work from both statistics and machine learning fields, and provide a review of such applications and techniques.\par This research is intricately linked to level set estimation, a dedicated field aimed at pinpointing domain subsets that closely mimic the actual level set, derived from a limited series of noisy function evaluations. Level set estimation typically revolves around sequential experimental design or active learning scenarios, where the practitioner is interested in identifying input sets where the function value exceeds a predetermined threshold. Refer to \cite{bryan2005active}, \cite{gotovos2013active}, \cite{mason2021nearly}, and \cite{ha2021high} for illustrative examples. Bryan et al. are renowned for devising the `straddle' heuristic, a strategic approach to select the subsequent point in a sequential experimental design framework. Here, a surrogate—commonly a Gaussian Process —is employed to approximate an elusive, noise-laden, black-box function. The primary aim is to choose input configurations (termed experimental points by Bryan et al.) that enhance the GP's precision in level set estimation. Conventionally, the surrogate GP undergoes evaluations across a spectrum of random input configurations, and the candidate input that maximizes a specific criterion is chosen for the next experimental iteration. The straddle heuristic, a brainchild of Bryan et al., ingeniously balances the exploration of the input space with the exploitation of known values proximate to the target. This heuristic is encapsulated in Equation \eqref{eq:straddle}, where $s_q$ represents the input configuration under evaluation, $\hat{f}$ denotes the estimated mean of the GP surrogate at $s_q$, and $\hat{\sigma}_{s_q}$ is the estimated standard deviation provided by the GP at that point.
\begin{equation} \label{eq:straddle}
    \text{straddle}(s_q) = 1.96\hat{\sigma}_{s_q}-|\hat{f}(s_q)-t|. 
\end{equation}

Although the straddle heuristic works well in determining which input configuration to select next out of a set of points, it does not offer a way to find that set.  The most common method is to sample points in the input space and evaluate them in the GP. Later in this paper, we show that our method allows efficient exploration of the input space based on the GP to obtain a collection of points with an estimated mean response within a tolerance of the target value.  Thus, our method allows efficient exploration of the input space by only considering points with an estimated response close to the desired target instead of randomly selecting input configurations, many of which will not give a result close to the target.  Furthermore, upon the generation of multiple input configurations via our proposed methodology, a diverse array of metrics, extending beyond the straddle heuristic, may be employed to determine the subsequent experimental point. For instance, should the investigator wish to probe the input space more thoroughly, they might opt for the point exhibiting maximal variance as per the Gaussian process model, namely, the point characterized by the greatest estimated uncertainty.

An important application of level sets is in the modeling of dynamically evolving interfaces, typically influenced by gas or fluid dynamics. The origin of this application can be traced to the foundational paper by \cite{osher1988fronts}. Osher expanded his research in this domain and co-authored a comprehensive textbook with Fedkiw (see \cite{osher2006level}), which provides a thorough introduction to the topic, underpinning the theoretical foundations, and discussing numerical strategies for implementation. In this work, Osher and Fedkiw provide techniques for identifying and delineating the level set by employing grids across a specified domain subset. Recognizing the improbability of grid points coinciding with the desired level set, they innovated methods to approximate level set points using gradient data and a mathematical framework known as the signed distance function. For a \emph{black-box}, undefined function, this requires the approximation of the gradient using finite-difference methods. While effective in low-dimensional spaces, such as two or three dimensions, the gradient's magnitude escalates in direct proportion to the number of input dimensions. Consequently, not only does the computational effort required to approximate a single level set point increase linearly with the input dimension, but the precision of the estimated level set points also suffers, potentially increasing variance and error. Our methodology facilitates the exploration of the input space without reliance on a pre-established grid, leveraging function evaluations to steer the trajectory of subsequent input configurations. The conventional grid-based approach is disadvantaged by the exponential increase in grid points as the input dimensions expand. Furthermore, our approach eschews gradient information, which is advantageous for black-box functions where gradient estimation is needed.  

\subsection{DSS-Wise Simulation}\label{ss:DSS-Wise}

As mentioned above, this research employs a dam breach simulation known as DSS-Wise \citep{altinakar2012parallelized}. This simulation allows the selection of an extensive array of dams throughout the United States, utilizing topographical data derived from satellite imagery and additional metrics to customize the simulation for a particular dam site. For example, it calculates the topography around the dam to predict the water flow resulting from a rupture. In addition, the simulation employs these measurements to determine the elevation and volumetric capacity of the water in the reservoir. Consequently, while the simulation is engineered to manage a substantial portion of the data required to simulate a dam breach, it allows the user to manipulate three specific input parameters at any given site. These parameters include the breach formation time, breach invert elevation and width, measured in hours and feet, respectively. It is evident that the permitted values for the elevation and width of the breach depend on the selected dam site. For example, the breach invert elevation is constrained by the elevations at the dam's base and crest, while the breach width is limited by the dam's structural breadth.

The output of the simulation is extensive.  Figure \ref{fig:DSS-Wise_output} illustrates a representative output map of the simulation. This simulation facilitates the delineation of the \textit{ observation lines} around the dam, which monitor the volumetric flow across these demarcations throughout the simulation process. These observation lines, on their own, allow for a substantial number of possible responses. Each line generates a time series of water flow data, termed hydrographs, with an example depicted in Figure \ref{fig:time_series}.\par The hydrographs exhibit several critical attributes that are analyzed by users, including the cumulative volume of water traversing the observation line during the flood event, the latency of water arrival at the line, the maximum flow rate across the line, and the duration to this peak flow, among others. Beyond the temporal data provided by the observation lines, the simulation also outputs raster and vector files that encapsulate data on the flood's severity and projected human impacts, differentiated by diurnal and nocturnal periods.

\begin{figure}[ht]
    \centering
    \includegraphics[width=.45\textwidth]{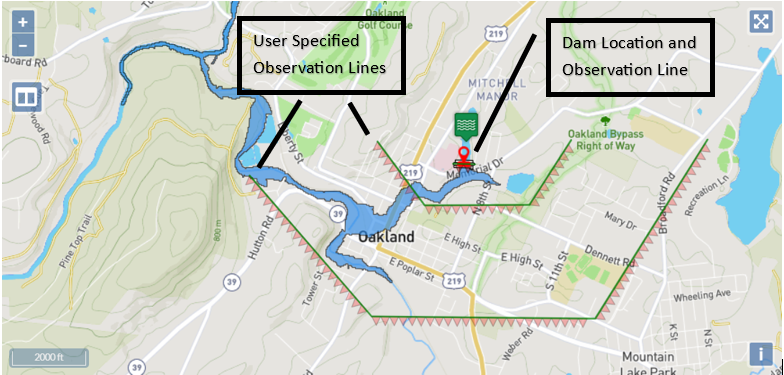}
    \caption{Example output from DSS-Wise dam breach simulation.}
    \label{fig:DSS-Wise_output}
\end{figure}

\begin{figure}[ht]
    \centering
    \includegraphics[width=.45\textwidth]{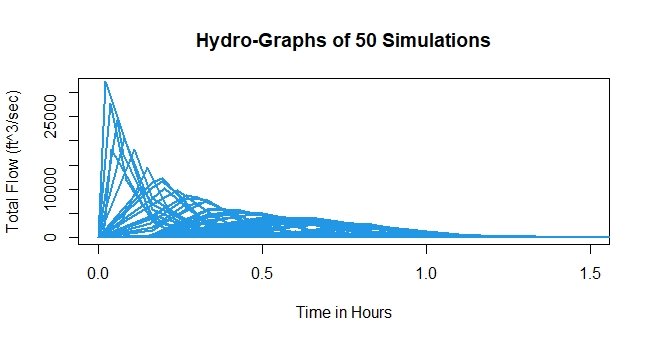}
    \caption{Time series of water flow over on observation line based on an ensemble of 50 simulation runs.}
    \label{fig:time_series}
\end{figure}

Since the goal of this research is to develop a method to address the challenges of the input specification, our main emphasis is on analyzing the responses related to hydraulic dynamics in the selected dam. Specifically, our research concentrated on the maximum total flow, quantified in cubic feet per second, the temporal interval to reach this peak flow, measured in hours, and the persistence of the water flow post-breach, also in hours.

\section{Background}\label{sec:background}

Given that solving the input specification problem is mathematically equivalent to finding the level set of a function, we frame our method in terms of level sets of a given function at a specific value.  In Section \ref{ss:level_set} we give a mathematical definition of a level set along with some examples of level sets.  In Section \ref{ss:MCMC} we give a brief overview of M-H and ABC. Our method works by placing a probability distribution over input configurations that elicit the desired response, and we developed our Smoothed-ABC, within M-H algorithm to obtain samples of these input configurations.


\subsection{Level Set} \label{ss:level_set}

In this section, we provide a brief introduction to the mathematical notion of a level set.  The familiar reader should proceed to the next subsection.  An illustration of level sets on a $2-D$ input function is given in Appendix \ref{app:level}.

Traditionally, the level set of a real-valued function, $f:\mathbf{R}^n\rightarrow\mathbf{R}$, of $n$ inputs, has been defined as the subset of the domain where $f$ takes on a constant value, $c$. Notionally we would define the level set of function $f$ at $c$ as the following: $A_c(f)=\{(\dv_1, \dv_2, ..., \dv_n)|f(\dv_1, \dv_2, ..., \dv_n) = c\}$. When the input dimension is two, the level set is often referred to as the level curve, contour line, or isoline, and, by extension, in three dimensions, the level set is often referred to as the level surface or isosurface.

A natural extension of the definition for the level set of a real-valued function is to broaden the response type and allow for vector-valued functions where the constant now becomes a vector of real values.  This only slightly changes the definition of a level set. We are now working with functions $f:\mathbf{R}^n\rightarrow\mathbf{R}^m$ and a constant vector $\Vec{c}$ of length $m$.  The definition of a level set is almost unchanged $A_{\Vec{c}}(f)=\{(\dv_1, \dv_2, ..., \dv_n)|f(\dv_1, \dv_2, ..., \dv_n) = \Vec{c}\}$. We will explore this possibility and show that our method can be extended to work with vector-valued functions in Section \ref{sec:real}.  

\subsection{Metropolis-Hastings Algorithm and Markov Chain Monte Carlo}\label{ss:MCMC}

This subsection provides a concise overview of the Metropolis-Hastings Algorithm. Readers already familiar with this algorithm may skip this subsection. It is important to note that this is not an exhaustive exploration of the Metropolis-Hastings or Markov Chain Monte Carlo  methodologies, given the extensive literature available on these topics. Rather, this section aims to provide sufficient foundational knowledge to demonstrate that our level set algorithms, delineated in Section \ref{sec:theory}, achieve convergence to the specified distribution.

Markov chain Monte Carlo is typically used to refer to a set of algorithms designed to obtain samples from a difficult to compute distribution or to estimate an intractable integral.  As the name suggests, the chain of samples is obtained using the Markovian property, namely that given a sample in the chain, the next sample only depends on the current value and not any of the other preceding values.  There are several flavors of MCMC algorithms; however, we will focus on one known as the Metropolis-Hastings algorithm \cite{metropolis1953equation}.  Much has been written about the M-H algorithm and its variants; for example, \cite{chib1995understanding} gives a partial history of the algorithm and a full derivation to show that the M-H algorithm satisfies the so-called detailed balance equation, which prove convergence to the stationary distribution (i.e. target). For our purposes, it is enough to know that the generalized M-H algorithm satisfies the detailed balance equations, and thus, as long as the support of $p(\dv)$ is a subset of the support of $q(\cdot|\cdot)$ the chain of samples generated by the M-H algorithm will converge to the target distribution of $p(\dv)$. The general M-H algorithm is presented in Algorithm \ref{alg:M-H}.

\begin{algorithm}
\caption{Metropolis Hastings Algorithm}\label{alg:M-H}
\begin{algorithmic}[1]
\Require A target distribution $p(\dv)$ and a transition density $q(\cdot|\cdot)$.
\State Allocate Matrix of Samples, $\DV$
\State Initialize $\DV[1]$ to some value in the support of $p(\dv)$
\For{$n=2, \dots, N$}
    \State Sample $\Tilde{\dv} \sim q(\cdot|\DV[n-1])$
    \State Set $\alpha =$ min$\left\{ 1, \dfrac{p(\Tilde{\dv})q(\DV[n-1]|\Tilde{\dv})}{p(\DV[n-1])q(\Tilde{\dv}|\DV[n-1])}\right\}$
    \State Sample $u\sim U(0,1)$ 
    \If{$u < \alpha$}
        \State $\DV[n] \gets \Tilde{\dv}$
    \Else
        \State $\DV[n] \gets \DV[n-1]$
    \EndIf
\EndFor 
\State \Return $\DV$
\end{algorithmic}
\end{algorithm}

The M-H algorithm is often used in Bayesian statistics, where the target distribution, $p(\dv)$, is a posterior distribution of model parameters given data, often denoted as $\pi(\dv|X)$, where $X$ represents the data.  Following Bayes' rule, these posterior distributions are proportional to a prior distribution, denoted $\pi(\dv)$, over the parameters times the model likelihood function, denoted $L(\dv|X)$. Thus, the traditional Bayesian implementation of the M-H algorithm replaces $p(\dv)$ with $\pi(\dv|X)\propto \pi(\dv)\cdot L(\dv|X)$.  Note that the normalizing constant for $\pi(\dv|X)$ is not needed for the M-H algorithm as it cancels in the ratio of line 5 of Algorithm \ref{alg:M-H}.    

\subsection{Approximate Bayesian Computation}\label{ss:ABC}

Approximate Bayesian Computation was initially introduced in 1999 by Pritchard et al. to address a complex issue in population genetics \cite{pritchard1999population}. Subsequently, ABC has evolved into an indispensable tool/algorithm for models characterized by intractable or difficult to compute likelihood functions, frequently encountered in the domain of population genetics. In these models, it is feasible to simulate the data once the parameter values have been defined. The conventional ABC methodology involves sampling parameters from the prior distribution, simulating data samples from the model, calculating a set of sufficient statistics for the simulated data, and compare these with the sufficient statistics of the observed data. Should sufficient statistics of the simulated data fall within a predetermined threshold, typically based on Euclidean distance, from those of the observed data, the parameters are retained as a sample from the posterior distribution. Conversely, if the sufficient statistics of the simulated data fall beyond this predefined threshold, the parameters are discarded. This procedure is repeated until $M$ samples from the posterior distribution are accumulated. Initially, our focus on ABC was driven by its potential to address the input specification challenge, as ABC seeks to resolve a similar problem. The ensemble of simulation runs is analogous to the observed set of data that Bayesian analysis is based; a simulation surrogate, such as a GP, assumes the role of model/data simulation in ABC, and the user-defined target aligns with the sufficient statistics. With this analogy in mind, we seek to modify ABC to effectively solve the input specification issue. 

ABC has garnered considerable attention since its initial implementation by Pritchard et al. in 1999, giving rise to multiple variants of the foundational algorithm, as discussed in the previous section. For an extensive review of ABC methodologies, refer to \cite{beaumont2019approximate}. The specific variant under consideration herein is ABC-MCMC, which integrates the ABC framework with the classical Metropolis-Hastings algorithm prevalent in Bayesian inferential statistics. This hybrid algorithm, ABC-MCMC, is described in Algorithm \ref{alg:M-H_ABC}.  Note that instead of including the ratio of likelihood evaluations, $\frac{L(\Tilde{\dv}|X)}{L(\DV[n-1]|X)}$, the algorithm uses an indicator function that is $1$ if $\|\Tilde{s}-s_{obs}\|\leq\epsilon$ and $0$ otherwise.  Thus, parameter settings that yield sufficient statistics outside of a ``sphere" centered around $s_{obs}$ are automatically rejected.  Parameter settings that elicit sufficient statistics inside of the ``sphere" are accepted with probability $\min\left\{ 1, \frac{\pi(\Tilde{\dv})q(\DV[n-1]|\Tilde{\dv})}{\pi(\DV[n-1])q(\Tilde{\dv}|\DV[n-1])}\right\}$.

\begin{algorithm}
\caption{Approximate Bayesian Computation within Metropolis-Hastings Algorithm}\label{alg:M-H_ABC}
\begin{algorithmic}[1]
\Require A prior distribution $\pi(\dv)$, a model simulator $M(\dv)$, observed data, $X$, from which sufficient statistics $s_{obs}$ are computed, and a transition density $q(\cdot|\cdot)$.
\State Allocate Matrix of Samples, $\DV$
\State Initialize $\DV[1]$ to some value in the support of $\pi(\dv)$
\For{$n=2, \dots, N$}
    \State Sample $\Tilde{\dv} \sim q(\cdot|\DV[n-1])$
    \State Simulate data, $X_{sim}$, from $M(\Tilde{\dv})$ and compute sufficient statistics $\Tilde{s}$ based on $X_{sim}$
    \State Set 
    \begin{dmath*}
    \alpha =\text{min}\left\{ 1, \dfrac{\pi(\Tilde{\dv})q(\DV[n-1]|\Tilde{\dv})}{\pi(\DV[n-1])q(\Tilde{\dv}|\DV[n-1])}\times \\ \hspace{35pt} \mathbf{I}(\|\Tilde{s}-s_{obs}\|\leq\epsilon)\right\}    
    \end{dmath*}
    \State Sample $u\sim U(0,1)$ 
    \If{$u < \alpha$}
        \State $\DV[n] \gets \Tilde{\dv}$
    \Else
        \State $\DV[n] \gets \DV[n-1]$
    \EndIf
\EndFor 
\State \Return $\DV$
\end{algorithmic}
\end{algorithm}

Having a hard cutoff has limitations in an M-H algorithm.  For example, it is possible to start the chain in a part of the input space that does not satisfy $\|\Tilde{s}-s_{obs}\|\leq\epsilon$ for $\Tilde{\dv}$ generated by $q(\cdot|\DV[1])$.  That is, the chain begins in a section of ``zero acceptance".  There are a couple of obvious solutions to this problem.  One would be to increase the variance or spread of the proposal distribution, $q(\cdot|\cdot)$.  One downside of this option is that there is no way to know how much to increase the variance to guarantee that a proposal, $\Tilde{\dv}$, could be within a region in the parameter space that will satisfy $\|\Tilde{s}-s_{obs}\|\leq\epsilon$. Furthermore, the variance of $q(\cdot|\cdot)$ has an impact on the efficiency of an M-H algorithm, and increasing the variance could have a negative impact on other aspects of the chain. The other option would be to increase $\epsilon$, thus allowing the acceptance of more possible parameter configurations.  This has the same downside as increasing the variance, and both options would require a trial-and-error approach. A third option would be to simply change the initial input configuration, but this too would require a trial-and-error approach, as we assume that if a ``good" starting location was known, the user would have started the chain there. 
 We introduce a novel methodology termed Smoothed Approximate Bayesian Computation where we substitute $\mathbf{I}(\|\Tilde{s}-s_{obs}\|\leq\epsilon)$ in line 6 of Algorithm \ref{alg:M-H_ABC} with $\frac{N(\Tilde{s}|\mu=s_{obs}, \Sigma=\stol)}{N(s^{n-1}|\mu=s_{obs}, \Sigma=\stol)}$, where $s^{n-1}$ represents the sufficient statistics for the simulated data derived from $M(\DV[n-1])$. This replacement with the ratio of multivariate normal densities, centered at $s_{obs}$, offers several advantages over the conventional indicator function. Primarily, the normal density remains positive throughout the entire spectrum of potential $\Tilde{s}$ values, ensuring that $\alpha>0$ provided $\pi(\Tilde{\dv})q(\DV[n-1]|\Tilde{\dv})>0$. Consequently, irrespective of the initial position within the chain, provided that the initial value falls within the support of $\pi(\dv)$, there exists a probability for the chain to transition towards $\Tilde{\dv}$, effectively eliminating any 'zero acceptance' regions. Secondly, by employing the ratio, the chain is facilitated in making incremental progress towards the objective $s_{obs}$ (stationary distribution; in our case, the specified level set), even if $\Tilde{s}$ does not closely approximate $s_{obs}$. Analogously, by using the ratio of normal densities, we metaphorically provide the chain a 'gradient' to ascend towards the apex represented by $s_{obs}$. It is pertinent to note that the variance matrix, $\stol$, within these normal densities serves a purpose similar to $\epsilon$ and the norm in the indicator function. We advocate for the utilization of a diagonal matrix for $\stol$, which simplifies the interpretation of the sufficient statistics generated during this process. Upon convergence of the chain, it is anticipated that 95\% of the sampled sufficient statistics, $s$, will adhere to the condition that $|s^i-s_{obs}^i|\leq 2\sqrt{\Sigma_\epsilon^{i,i}}$, where $i$ indexes each component within the vector of sufficient statistics.

The origins of the S-ABC algorithm are Bayesian in nature, but there is no reason to limit the application of this algorithm to Bayesian analysis.  As we shall see in Sections \ref{sec:real} and \ref{sec:sim}, we apply this algorithm to solve the input specification problem for large-scale simulations, as well as level set coverage and estimation for known functions.  

\section{Level set MCMC Algorithm}\label{sec:theory}

In this section, we present the level set MCMC with the S-ABC algorithm and the level set MCMC algorithm, as well as illustrate the effects that the hyperparameters of these algorithms have on the obtained samples.  

\subsection{Problem Specification}

As previously described, the M-H algorithm converges to a predefined stationary probability distribution. Typically, the ABC algorithm is used in scenarios where this distribution is either analytically intractable or is in a black-box as in computer modeling applications. The initial task is fairly direct. Given a function or simulation, $f(\dv)$, producing a continuous output, and a constant within the response range, $c$, we can define a non-empty set $A = \{\dv : f(\dv) = c\}$. In this paper, $\dv$ denotes the input configurations for the function or simulation. The existence of a non-empty set $A$ is guaranteed assuming that $c$ is within the output range of the function or simulation. We then establish a measure, $\pi_c(\dv)$, corresponding to the domain of the function or simulation, such that $\int_A \pi_c(\dv) \, d\dv = 1$. Consequently, the support of $\pi_c(\cdot)$ is precisely the set $A$. Contrary to typical MCMC algorithms, our objective is not to explore the complexities of $\pi_c(\cdot)$ by identifying modes or to accumulate sufficient samples to approximate $\int h(\dv) \pi_c(\dv) \, d\dv$ for some function $h(\dv)$. Rather, our goal is to traverse the input space and approximate the support of $\pi_c(\cdot)$, or in other terms, to acquire samples from the set $A$ and to determine the boundaries of $A$.

\subsection{Surrogate for the Objective Function}\label{ss:obj_fun}
In most cases, the function or simulation, denoted as $f(\dv)$, remains a black-box function. Typically, a space-filling design, such as a Latin hypercube, is used to produce a collection of simulation trials, upon which a surrogate model is constructed based on the gathered data. Multiple surrogate methodologies are available; however, this study utilizes a GP surrogate, represented as $G(X|\dv)$, where $\dv$ means the inputs to the function or simulation, and $X$ represents the targeted response from the simulation. Refer to \cite{gramacy2020surrogates} for an exhaustive exploration of GPs as surrogate models in computational simulations. The function or simulation, $f(\dv)$, may be deterministic or subject to additive stochastic errors. In scenarios involving the latter, the level set is defined by a region (i.e. the true level set function with added variability). With this modification to our level set MCMC algorithm, we incorporate the stochastic noise within the Gaussian process through estimation of the `nugget' parameter (i.e. an additional additive variance parameter that measures non-spatial uncertainty). Furthermore, replication within the space-filling design is used to estimate the `nugget'. This paper mostly focuses on deterministic simulations, yet includes a `nugget' parameter within the GP surrogate to maintain numerical stability, and ensuring that the estimation of the nugget is close to zero.

After fitting the Gaussian process surrogate to the observed data, we deploy the S-ABC algorithm to find the input configurations that encapsulate the level set of the function or simulation. The notion of `coverage' is cumbersome outside of a one-dimensional framework; however, as described in Section \ref{sec:sim}, the level set MCMC  S-ABC algorithm is designed to generate a distribution of points (a point cloud) that adequately spans the level set for the function/simulation, $f(\dv)$. This coverage is achieved by the level-set MCMC Algorithm with Smoothed-ABC by integrating over the uncertainty inherent in the GP, by using the estimated variance at a sampled input configuration, and subsequently sampling from the marginal normal distribution at the said configuration. This process is similar to that of conventional ABC. This step is documented in line 10 of Algorithm \ref{alg:LSMCMC-Var}. It is important to note that $G_\mu(\cdot)$ symbolizes the mean value of the GP at the designated input configuration, while $G_\Sigma(\cdot)$ denotes the variance or uncertainty.  Note the similarity between line 5 of Algorithm \ref{alg:M-H_ABC} and line 10 of Algorithm \ref{alg:LSMCMC-Var}.

\begin{algorithm}
\caption{Level set MCMC Algorithm with Smoothed ABC}\label{alg:LSMCMC-Var}
\begin{algorithmic}[1]
\Require GP Surrogate $G(X|\dv)$, and $c$ in the range of the GP.
\State Allocate Matrix of Input Configurations, $\DV$
\State Allocate Vector of Sample Responses, $\mathbf{S}$ 
\State Initialize $\DV[1]$ to some value in the support of $G(X|\dv)$
\State Sample $s\sim N(G_\mu(\DV[1]),G_\Sigma(\DV[1]))$ and Assign $\mathbf{S}[1] \gets s$
\For{$n=2, \dots, N$}
    \State Sample $\Tilde{\dv} \sim N(\DV[n-1], \Sigma_{pro})$
    \If{$\Tilde{\dv}$ is not in the support of $G(X|\dv)$}
        \State Set $\alpha=0$
    \Else
        \State Sample $\Tilde{s} \sim N(G_\mu(\Tilde{\dv}), G_\Sigma(\Tilde{\dv}))$
        \State Set $\alpha =$ min$\left\{ 1, \dfrac{N(\Tilde{s}|\mu=c, \Sigma_{tol})}{N(\mathbf{S}[n-1]|\mu=c, \Sigma_{tol})}\right\}$
    \EndIf
    \State Sample $u\sim U(0,1)$ 
    \If{$u < \alpha$}
        \State $\DV[n] \gets \Tilde{\dv}$
        \State $\mathbf{S}[n] \gets \Tilde{s}$
    \Else
        \State $\DV[n] \gets \DV[n-1]$
        \State $\mathbf{S}[n] \gets \mathbf{S}[n-1]$
    \EndIf
\EndFor 
\State \Return $\DV$ and $\mathbf{S}$
\end{algorithmic}
\end{algorithm}

In line 11 of Algorithm \ref{alg:LSMCMC-Var}, the computation involves the ratio of Gaussian densities, each centered at the designated target value and characterized by a predetermined variance. These Gaussian densities are assessed using a marginal sample from the GP in the current input configuration $\mathbf{S}[n-1]$, and in the proposed input configuration $\Tilde{s}$. Should $\Tilde{s}$ increase the target, $c$, more than $\mathbf{S}[n-1]$, this ratio will increase, thereby enhancing the likelihood of acceptance of the input configuration proposed by the algorithm. Crucially, line 7 incorporates a verification step to ascertain that the proposed input configuration lies within the permissible range of input values for the GP, as delineated by the dataset employed for the GP's calibration. Generally, GPs excel in interpolation, but perform poorly in extrapolation. Proposals that extend beyond the data set boundaries may induce a stochastic divergence in the chain, preventing convergence to the accurate distribution. Typically, a space-filling design, such as a Latin hypercube, is implemented to determine the input values for the GP calibration. The range of values selected in this design establishes the priors for each input parameter. In conventional ABC, input configurations are sampled from these priors, necessitating the a well-defined proper prior distribution. It is worth noting that the proposal distribution's symmetry precludes its inclusion in the acceptance ratio computation in line 11. Conversely, had a non-symmetric proposal been selected, a corresponding ratio of the proposal densities would be requisite in computing the acceptance ratio.

In Section \ref{sec:sim}, Algorithm \ref{alg:LSMCMC-Var} will be utilized to demonstrate the coverage property described above, employing a well-known function. In contexts where evaluating the level set is an integral step of a sequential design paradigm, a restricted subset of data points is selected in each iteration, serving as potential candidates for subsequent experimental or simulation endeavors. In such instances, the emphasis transitions from mere ``coverage" to the strategic selection of input configurations that precisely delineate the level set and rigorously scrutinize the input domain to ensure comprehensive boundary identification. Accordingly, the preference shifts towards selecting points that are most likely to facilitate the accurate demarcation of boundaries with a minimal number function/simulation evaluations. A refined version of Algorithm \ref{alg:LSMCMC-Var} is proposed, based exclusively on the mean of the GP in the proposed input configuration, deliberately eschewing the inherent uncertainty associated with the use of a GP to approximate the function or simulation. This version of the algorithm is described in the following section.

\begin{algorithm}
\caption{Level set MCMC Algorithm}\label{alg:LSMCMC}
\begin{algorithmic}[1]
\Require GP Surrogate $G(X|\dv)$, and $c$ in the range of the GP
\State Allocate Matrix of Input Configurations, $\DV$
\State Initialize $\DV[1]$ to some value in the support of $G(X|\dv)$
\For{$n=2, \dots, N$}
    \State Sample $\Tilde{\dv} \sim N(\DV[n-1], \Sigma_{pro})$
    \If{$\Tilde{\dv}$ is not in the support of $G(X|\dv)$}
        \State Set $\alpha=0$
    \Else
        \State Set $\alpha =$ min$\left\{ 1, \dfrac{N(G_\mu(\Tilde{\dv})|\mu=c, \Sigma_{tol})}{N(G_\mu(\DV[n-1])|\mu=c, \Sigma_{tol})}\right\}$
    \EndIf
    \State Sample $u\sim U(0,1)$ 
    \If{$u < \alpha$}
        \State $\DV[n] \gets \Tilde{\dv}$
    \Else
        \State $\DV[n] \gets \DV[n-1]$
    \EndIf
\EndFor 
\State \Return $\DV$
\end{algorithmic}
\end{algorithm}

The principal distinctions between Algorithms \ref{alg:LSMCMC-Var} and \ref{alg:LSMCMC} lie in the transition from sampling the estimated marginal response distribution to utilizing the estimated mean of the GP as the predictive measure for the response. This mean is assessed through the ratio of normal densities to ascertain the acceptance of a proposed input configuration. Employing the estimated mean (vs. a sample) transforms the GP into a deterministic functional/simulation emulator, suggesting that Algorithm \ref{alg:LSMCMC} could serve as a method to obtain the level sets of any deterministic function, which we demonstrate in Section \ref{sec:sim}.

Within both of the described algorithms (stochastic/deterministic), two variance matrices are posed, both of which have important influences. These include the variance matrix that controls the spread within the proposal distribution, $\Sigma_{\text{prop}}$, and the variance that sets the allowed deviation in the response compared to the target, $\Sigma_{\text{tol}}$. The matrix $\Sigma_{\text{prop}}$ governs the exploration of the input space for $\pi_c(\cdot)$ and can be adjusted to handle the complexities in support of $\pi_c(\cdot)$. For instance, if $A$ is disjoint, making $\pi_c(\cdot)$ multi-modal, a proposal capable of `jumping' between the disjoint areas of $A$ enhances the domain space exploration. This challenge is well recognized and documented in Metropolis-Hastings algorithms such as the one described in Algorithm \ref{alg:LSMCMC}. A depiction of multi-modality is shown in Subsection \ref{ss:2D_examp}. The second variance matrix, $\Sigma_{\text{tol}}$, specifies the allowable tolerance around the response $c$. 

As shown in Section \ref{sec:real}, in the case where multiple responses of interest emerge from a specific simulation, a GP is individually fitted to each response, subsequently stacking these responses as a vector of responses utilized in the preceding algorithm. Consequently, the estimated mean for the response $i^{th}$ in the input configuration $\dv$, represented by $G_{\mu,i}(\dv)$, possesses the characteristic that $95\%$ of all $\dv$ samples obtained via this algorithm will manifest an estimated mean response from the GP surrogate within the components-wise confines of $c_i\pm1.96\stol^{(i,i)}$. Incorporating covariance terms into $\stol$ could confuse the interpretation of acquired samples. Taking $\stol$ as a diagonal matrix suggests the independence of each component. The advantage of employing a multivariate Gaussian over a concatenation of univariate Gaussian distributions lies in the simplification of implementation and increase in numerical stability.

We still need to show that the samples obtained from Algorithm \ref{alg:LSMCMC}, converge to the desired level set (or distribution of level sets).  These mathematical proofs are ancillary to presenting S-ABC and the level set algorithms and are thus presented in the Appendix \ref{app:Proofs}.

\section{Application - DSS-Wise model}\label{sec:real}

In this section, we implement the level set MCMC algorithm within the DSS-Wise flood simulation framework. As described in Subsection \ref{ss:DSS-Wise}, this model incorporates three primary inputs that relate the characteristics of a dam breach specific to an individual dam. These characteristics encompass the formation time of the breach, quantified in hours, the elevation of the breach invert measured in feet, and the mean width of the breach, also in feet. The latter two parameters are constrained by the topographical features of the dam, and each dam in the simulation has distinct thresholds for acceptable parameter values. Empirical evidence suggests that the formation time for a partial breach typically does not exceed one hour, see \cite{altinakar2020failure}.  

Initially, a dam was selected within the township of Oakland, Maryland. Two observation boundaries were established that surround the dam, as illustrated in Figure \ref{fig:DSS-Wise_output}. Specifically, the Little Youghiogheny site number 1 dam was chosen for analysis, encompassing the Little Youghiogheny site number 1 reservoir. An ensemble of simulation points was generated employing a Latin hypercube sampling method with a dimensionality of 50, which serves as a space-filling strategy for the scaled input parameters. The parameters for the height of the breach and the width of the breach were normalized between 0 and 1, corresponding to the value range specified by the DSS-Wise simulation, 2400 ft-2425.94 ft for the height of the breach and 150 ft-502.8 ft for the width of the breach. 

There are many possible responses that are amenable for analysis; however, as a demonstration, we consider the maximum total flow of water over the dam in $ft^3/sec$, the time to maximum flow and the duration of the flood (both in hours).  First, we apply our level set MCMC Algorithm to the maximum total flow response and then examine how the level set MCMC Algorithm performs when applied to all three responses simultaneously by applying a multivariate Gaussian distribution over the desired response values instead of a univariate Gaussian described in line 8 of Algorithm \ref{alg:LSMCMC}.

\subsection{Maximum Total Flow}\label{ss:max_flow}

As described in Algorithm \ref{alg:LSMCMC}, a GP was fit to the Latin hypercube of size $50$, and the response of interest, the maximum flow of water over the dam. Algorithm \ref{alg:LSMCMC} was run on the resulting GP.  As a basis for better understanding the relationship between the input parameters and the response, we present below the estimated Sobol indices for each input in Table \ref{tab:sobol}.

\begin{table}
    \centering
    \begin{tabular}{|c|c|c|}
        \hline
        Formation time & Scaled breach elevation & Scaled breach width \\
        \hline
        0.6 & 0.26 & 0.01 \\
        \hline
    \end{tabular}
    \caption{Estimated Sobol indices for the three input parameters for the maximum flow response.}
    \label{tab:sobol}
\end{table}

In order to apply Algorithm \ref{alg:LSMCMC} we specify a target response, $c=5000$, a tolerance, $\stol=50^2$, and variance for the proposal distribution, $\spro=0.05^2$.  The algorithm was run for $20,000$ iterations with an initial starting configuration of formation time $0.85$, scaled breach invert elevation $0.95$, and scaled breach width $0.5$.  The results are displayed in Figure \ref{fig:MF}.

\begin{figure*}[ht]
    \begin{subfigure}{.33\textwidth}
        \includegraphics[width=\linewidth]{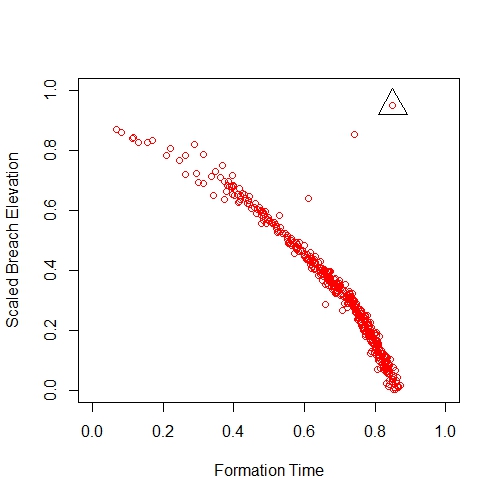}
        \caption{Input samples for formation time and scaled breach elevation based on a desired response of $5000$ maximum flow}
        \label{fig:MF_forTimeXbreachEle}
    \end{subfigure}
    \begin{subfigure}{.33\textwidth}
        \includegraphics[width=\linewidth]{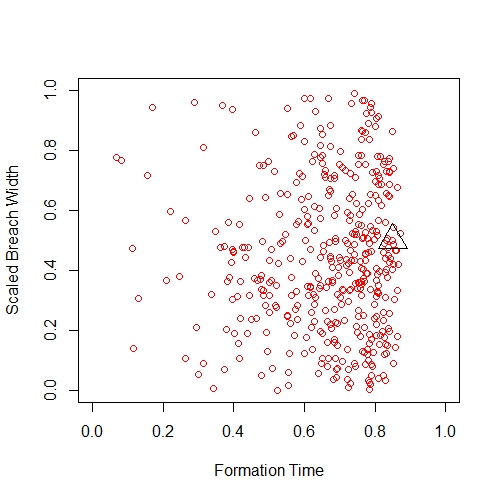}
        \caption{Input samples for formation time and scaled breach width based on a desired response of $5000$ maximum flow}
        \label{fig:MF_forTimeXbreachWid}
    \end{subfigure}
    \begin{subfigure}{.33\textwidth}
        \includegraphics[width=\linewidth]{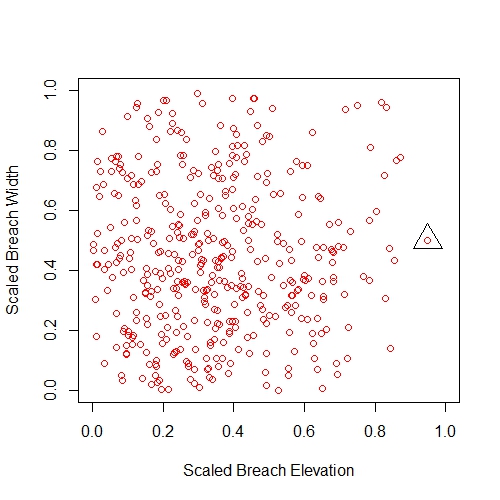}
        \caption{Input samples for scaled breach elevation and scaled breach width based on a desired response of $5000$ maximum flow}
        \label{fig:MF_breachEleXbreachWid}
    \end{subfigure}
    \caption{Input configuration samples obtained using the level set MCMCM algorithm.}
    \label{fig:MF}
\end{figure*}

These plots are consistent with the Sobol indices presented in Table \ref{tab:sobol} since formation time and scaled breach elevation clearly have a direct relationship with maximum flow and scaled breach width has little to no impact.  However, these plots offer additional information beyond what the Sobol indices provide. Specifically, there is a clear inverse relationship between formation time and scaled breach elevation as they relate to maximum flow.  

\subsection{Level Set Estimation On Three Responses}\label{ss:3responses}

Often, the user of the simulation is interested in multiple responses.  In this subsection, we illustrate how the level set MCMC algorithm can be altered to find the intersection of level sets for multiple responses simultaneously.  As mentioned earlier, the only modification needed to the level set MCMC algorithm to accommodate multiple responses is to use a multivariate Gaussian distribution centered at a vector of the desired responses with a specified diagonal variance matrix.  Each element of the diagonal represents the tolerance around the desired responses.

For this analysis, we focus on three responses: maximum flow, time to maximum flow, and duration of the flood. The goal is to find input configurations that provide the user-defined target for all three responses. As an illustration, we set the desired outcome to be a maximum flow of $5,000 \hspace{1mm}\text{ft}3/\text{sec}$, a time-to-max flow of $0.3$ hours, and the duration of the flood to be $8$ hours.  A tolerance was selected such that the marginal standard deviation of each response would be $1$\% of the desired response, thus $\stol=\left(\begin{smallmatrix}
  50^2 & 0 & 0\\
  0 & 0.003^2 & 0 \\
  0 & 0 & 0.09^2
\end{smallmatrix}\right)$.  In this way, we can apply Algorithm \ref{alg:LSMCMC} to solve this problem by changing the ratio in line $8$ to be the ratio of multivariate normal distributions, $c$ to be a vector, $\Vec{c} = (5000, 0.3, 9)$ and $\Sigma_{\text{tol}}$ to be defined as above.  Given that the inputs are scaled, we keep $\Sigma_{\text{pro}}$ as it was defined in Section \ref{ss:max_flow}.  In order to estimate each response, a GP is fit to each response based on the same Latin hypercube of size $50$ mentioned in Section \ref{ss:max_flow}.  

To help illustrate how Algorithm \ref{alg:LSMCMC} runs on multiple responses, we first run the algorithm on each response individually and then compare the sample inputs obtained with a single response with the samples obtained by using multiple responses.  Figure \ref{fig:3responses} shows the three pairwise plots obtained by comparing the input configurations for each response individually and when run collectively.  

\begin{figure*}[ht]
    \begin{subfigure}{.33\textwidth}
        \includegraphics[width=\linewidth]{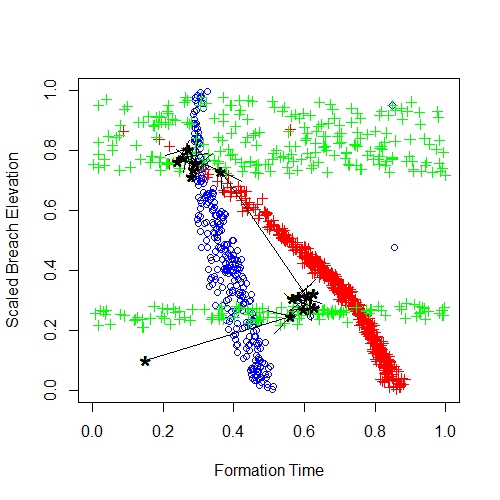}
        \caption{Input samples for formation time and scaled breach elevation based on the desired responses}
        \label{fig:3_forTimeXbreachEle}
    \end{subfigure}
    \begin{subfigure}{.33\textwidth}
        \includegraphics[width=\linewidth]{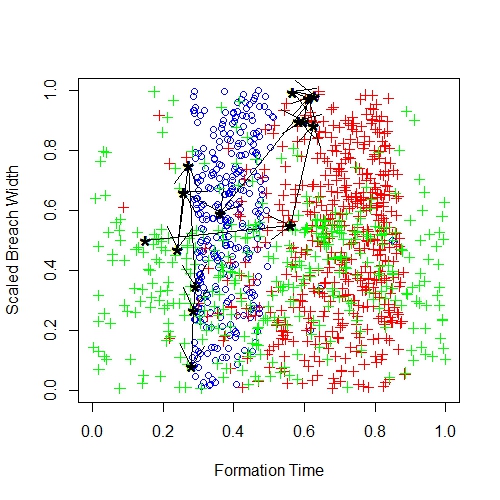}
        \caption{Input samples for formation time and scaled breach width based on the desired responses}
        \label{fig:3_forTimeXbreachWid}
    \end{subfigure}
    \begin{subfigure}{.33\textwidth}
        \includegraphics[width=\linewidth]{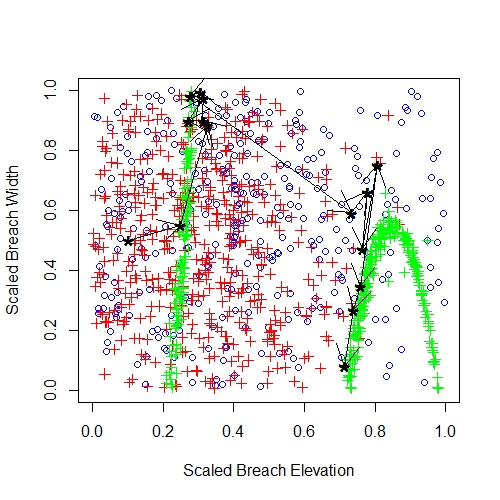}
        \caption{Input samples for scaled breach elevation and scaled breach width based on the desired responses}
        \label{fig:3_breachEleXbreachWid}
    \end{subfigure}
    \caption{Input configuration samples obtained using the level set MCMC algorithm on three different responses.  The red plus signs represent estimated input configurations in the maximum flow level set of $5,000$, the blue circles represent estimated input configurations on the time to maximum flow level set of $0.3$, the green plus signs represent estimated input configurations on the flood duration level set of $8$, and finally the black stars represent the samples obtained using algorithm \ref{alg:LSMCMC} on the three responses simultaneously.  Arrows are included to show the direction of the chain when run on all three responses.}
    \label{fig:3responses}
\end{figure*}

The red plus signs in Figures \ref{fig:3_forTimeXbreachEle}, \ref{fig:3_forTimeXbreachWid}, and \ref{fig:3_breachEleXbreachWid} represent the estimated level set for maximum flow at $5000$ and should be relatively similar to the points shown in Figure \ref{fig:MF}, as this is the same response, the fit of the GP and the target response.  The blue circles represent the estimated input configurations that give the $c=0.3$ level set for the time to maximum flow.  Finally, the green plus signs represent the estimated input configurations that give the set level $c=8$ for the response to the flood duration.  Each of these distributions of points represents the individual level sets for each response given a set target; thus it makes intuitive sense that the simultaneous level set for all three responses is the intersection of all three distributions.  This is exactly what we see happening.  The black stars represent the sampled input configurations of the simultaneous level set for all three responses.  Black arrows have been added to help illustrate the progression of the multi-response chain and to connect up points from all three plots. As we can see, especially in Figure \ref{fig:3_forTimeXbreachEle}, the chain eventually reaches the part of the input space where the red, green, and blue distributions of the points intersect.  There are fewer unique inputs sampled from the chain with three responses compared to chains with a single response. However, this is expected, as moving to a high-dimensional setting flattens out the probability space, thus reducing the number of accepted proposals.

The illustration above works well because there is overlap in the input space, making it possible to obtain all three targets simultaneously.  It is very easy to imagine a scenario in which it is not possible to find an input configuration that will simultaneously hit the target for all responses.  In order to investigate this situation, we changed the targets to a maximum flow of $12,000$, a time to maximum flow of $0.7$, and kept the same flood duration of $8$ hours.  The results of these runs are shown in Figure \ref{fig:3responses_nonoverlap}.

\begin{figure}[ht]
    \begin{subfigure}{.23\textwidth}
        \includegraphics[width=\linewidth]{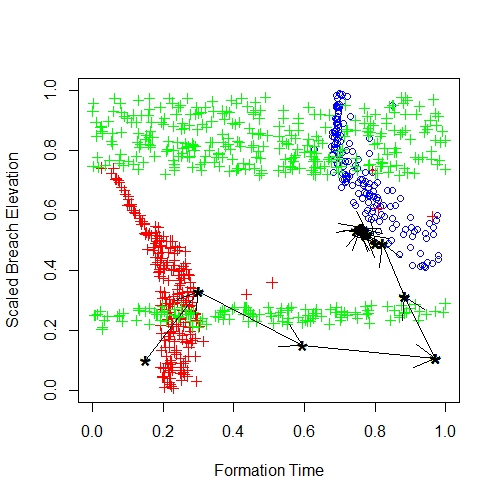}
        \caption{}
        \label{fig:3_nonoverlap_time}
    \end{subfigure}
    \begin{subfigure}{.23\textwidth}
        \includegraphics[width=\linewidth]{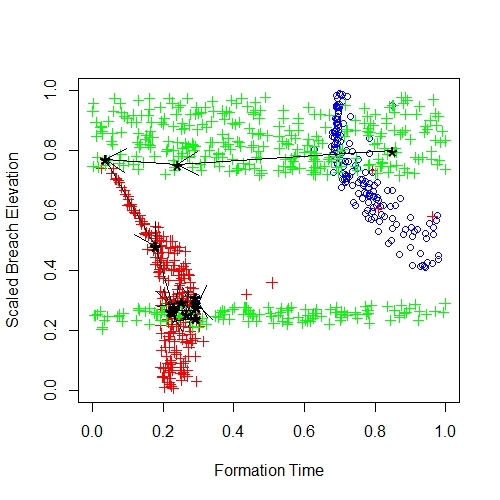}
        \caption{}
        \label{fig:3_nonoverlap_mf}
    \end{subfigure}
    \caption{Input configuration samples obtained using the level set MCMC algorithm on three different responses, where the simultaneous level set for the three responses is not possible.  The level set MCMC algorithm will obtain samples based on which response has the (relative to scale) smaller tolerance.}
    \label{fig:3responses_nonoverlap}
\end{figure}

As we can see from Figures \ref{fig:3_nonoverlap_time} and \ref{fig:3_nonoverlap_mf} the distribution for the Max Flow input configurations has shifted to the lower left corner, and the distribution for the Time to Max Flow input configurations has shifted to the upper right corner.  Thus, it is not possible to overlap all three distributions.  For Figure \ref{fig:3_nonoverlap_time} the tolerance matrix is set to be, $\stol=\left(\begin{smallmatrix}
  100^2 & 0 & 0\\
  0 & 0.002^2 & 0 \\
  0 & 0 & 0.09^2
\end{smallmatrix}\right)$, which gives a relatively lower tolerance on the response, Time to Max Flow, than the other responses.  This has the effect of the multi-response chain, the black stars, converging to points within the distribution for Time to Max Flow.  While all the points are within the blue distribution, as expected, the chain is converging to the points in that distribution that are `closest' in input space to the red distribution of points.  We can adjust the `preference' of responses in the multi-response chain by adjusting the relative tolerance on each response.  For example, the tolerance for the run shown in Figure \ref{fig:3_nonoverlap_mf} was set to be,  
$\stol=\left(\begin{smallmatrix}
  10^2 & 0 & 0\\
  0 & 0.009^2 & 0 \\
  0 & 0 & 0.09^2
\end{smallmatrix}\right)$ which is a dramatic decrease in the tolerance on Max Flow, from $100^2$ to $10^2$, and an increase on the tolerance for Time to Max Flow, from $0.002^2$ to $0.009^2$.  The result is the multi-response chain converges to points within the Max Flow distribution of red points while attempting to match the target of Time to Max Flow and Flood Duration as much as possible.  

\section{Simulated Studies}\label{sec:sim}

In this section, we demonstrate both the MCMC Algorithm of the level set that incorporates S-ABC and the same algorithm applied to a basic function, facilitating a straightforward visualization and a detailed understanding of the various adjustable parameters within the algorithm. Furthermore, we investigate the performance of the algorithm as the tolerance parameter approaches $0$, as defined in Theorem \ref{theorem}. Moreover, we shall examine the application of the level set MCMC algorithm to a high-dimensional function.

\subsection{Two Dimensional Example}\label{ss:2D_examp}

To illustrate the level set MCMC Algorithm as described in Algorithm \ref{alg:LSMCMC}, we will implement this algorithm on a well-defined function with two variables. This approach will enable us to effectively visualize both the function and the resulting samples. Now, consider the function described below:
\begin{equation}\label{eq:2D_test_fun}
    f(\dv_1,\dv_2)=e^{-((\dv_1-2)^2+(\dv_2-2)^2)}+2e^{-((\dv_1+2)^2+(\dv_2+2)^2)}.
\end{equation}

\begin{figure}
    \centering
    \includegraphics[width=.2\textwidth]{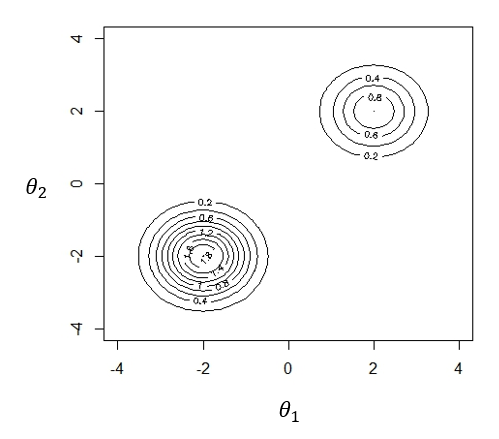}
    \caption{Contour plot for function given in Equation \eqref{eq:2D_test_fun}.}
    \label{fig:2D_test_fun}
\end{figure}

The function's contour plot is depicted in Figure \ref{fig:2D_test_fun}. Several key aspects of this function should be highlighted. Firstly, it features a global maximum at $(-2,-2)$ and a local maximum at $(2,2)$. Secondly, its range is defined by the interval $(0,2+e^{-32}]\approx(0,2]$. Thirdly, for any chosen value of $c$ within $(0,1)$, the corresponding level set comprises two circles, each centered at the global and local maxima, respectively, at $(-2,-2)$ and $(2,2)$. The contour lines in Figure \ref{fig:2D_test_fun} illustrate this behavior. Adhering to the initial method described in this section, a Latin hypercube with 50 points was created, and the function was assessed at these points. Using fifty points might seem excessive for a two-dimensional function, but it was chosen to demonstrate our algorithm clearly without the added complexity from GP uncertainty. Initially, we implemented Algorithm \ref{alg:LSMCMC-Var} targeting a value of $c=0.6$, with a proposal variance matrix of 
$\spro=\left(\begin{smallmatrix}
  3 & 0\\
  0 & 3
\end{smallmatrix}\right)$. 
A tolerance level of $\stol=0.1$ was set, with a uniform prior defined over the rectangle $\dv_1 \in [-4,4]$ and $\dv_2 \in [-4,4]$. We selected a variance of $3$ to facilitate transitions between the two modes. Subsequently, we executed Algorithm \ref{alg:LSMCMC} under the same conditions. The samples obtained are displayed in Figures \ref{fig:smoothed_abc_uniform_prior} and \ref{fig:mean_mcmc}. The initial analysis of these figures indicates that the level set MCMC, as described in Algorithm \ref{alg:LSMCMC} and illustrated in Figure \ref{fig:mean_mcmc}, effectively approximates the desired level set at $c=0.6$. The plotted points closely match the actual contour line, albeit with minor deviations due to the inherent uncertainty in the GP. A further observation reveals that the distribution of points by the level set MCMC using the Smoothed ABC, Algorithm \ref{alg:LSMCMC-Var}, compensates for the discrepancies in the mean estimate, effectively enveloping the accurate contour line within a cluster of points. Additionally, the dispersion of points around the contour at the local maximum at $(2,2)$ appears more spread out compared to those at the global maximum at $(-2,-2)$, logically deduced from the stronger signal at the global maximum, suggesting reduced uncertainty in the GP in that region compared to the local maximum.

\begin{figure}[ht]
    \begin{subfigure}[t]{.23\textwidth}
        \includegraphics[width=\linewidth]{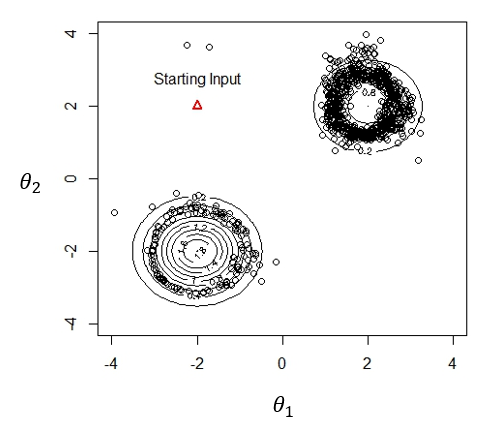}
        \caption{Smoothed ABC with uniform prior}
        \label{fig:smoothed_abc_uniform_prior}
    \end{subfigure}
    \begin{subfigure}[t]{.23\textwidth}
        \includegraphics[width=\linewidth]{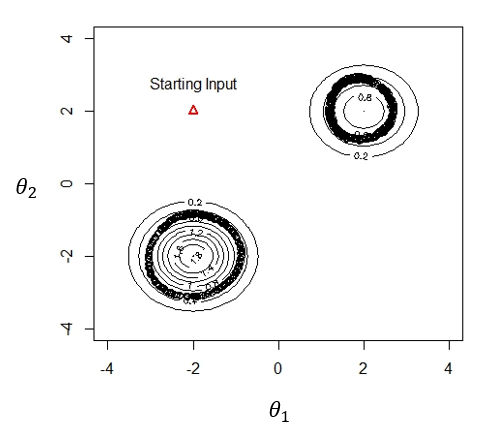}
        \caption{Level set MCMC}
        \label{fig:mean_mcmc}
    \end{subfigure}
    \caption{Effect of tolerance on accepted samples.}
    \label{fig:level_set_mean_and_smoothed}
\end{figure}

As discussed in Section \ref{sec:theory}, multiple adjustable parameters exist to tailor the function/simulation to the user's tolerance. As noted previously, the parameter $\stol$ determines the distance required to the target to accept a proposed input configuration. To explore the impact of varying $\stol$ on sample collection, we employ Algorithm \ref{alg:LSMCMC} on Function \eqref{eq:2D_test_fun} with $c=0.6$ in two different tolerance settings, $\stol=0.01^2$ and $\stol=0.1^2$. The application of the algorithm \ref{alg:LSMCMC} to the function \ref{eq:2D_test_fun} demonstrates the effectiveness of the algorithm in a deterministic function and clarifies the influence of $\stol$ on the results of the sample, eliminating any ambiguity in whether the function is deterministic or probilistic in nature. Given that the response is scalar, a univariate normal distribution is utilized as the target for the MCMC chain. These scenarios are depicted in Figure \ref{fig:tol_diff}. Both chains start at the same point, undergo the same number of MCMC iterations (5000), and use identical proposal distributions. In each case, the chain approaches the local maximum at $(2,2)$. The key difference lies in the proximity to the target response required to accept an input into the chain. In Figure \ref{fig:tol_0.01}, the chain, once converged to the $0.6$ contour line, remains closely aligned with minimal deviation. In contrast, Figure \ref{fig:tol_0.1} shows significant deviation from the contour line $0.6$ even after convergence. Post-convergence, most of the inputs in Figure \ref{fig:tol_0.1} fall between the contour lines $0.4$ and $0.8$, aligning with the expectations for a normal distribution centered at $0.6$ with a standard deviation of $0.1$, where it is expected that roughly $95\%$ of the accepted inputs will lie within $(0.6-2\times0.1, 0.6+2\times0.1)=(0.4,0.8)$. This relationship between the specified tolerance level and the acceptable response range is analagous to our method when expanded to a response vector, with $\stol$ configured as a diagonal matrix.

\begin{figure}[ht]
    \begin{subfigure}[t]{.23\textwidth}
        \includegraphics[width=\linewidth]{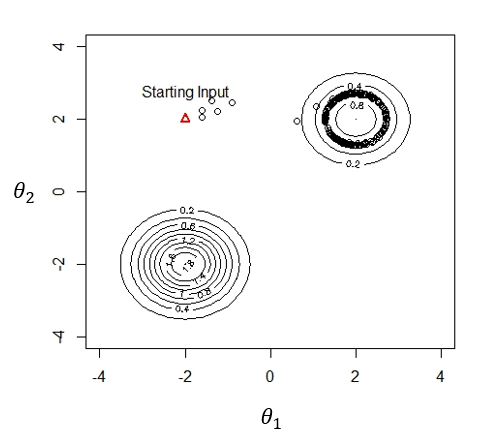}
        \caption{$\stol=0.01^2$}
        \label{fig:tol_0.01}
    \end{subfigure}
    \begin{subfigure}[t]{.23\textwidth}
        \includegraphics[width=\linewidth]{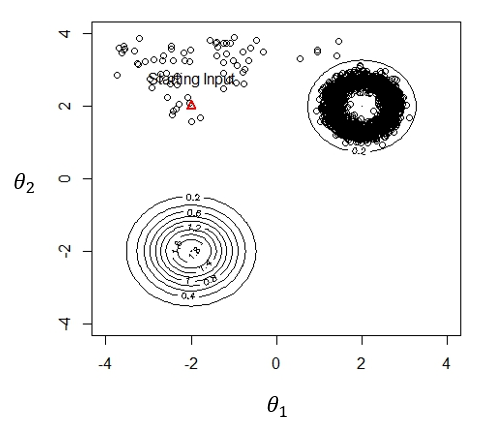}
        \caption{$\stol=0.1^2$}
        \label{fig:tol_0.1}
    \end{subfigure}
    \caption{Effect of tolerance on accepted samples.}
    \label{fig:tol_diff}
\end{figure}

Another adjustable parameter in level set MCMC algorithms, commonly shared with Metropolis-Hastings algorithms, influences how the algorithm navigates the input space. In these algorithms, $\Sigma_{\text{prop}}$ determines the magnitude of the `jump' that the chain can make with each proposal. A smaller $\Sigma_{\text{prop}}$ results in proposals that are closer to the current chain point, thereby complicating transitions between different modes of the target distribution. This behavior is illustrated in Figure \ref{fig:prop_diff}, which compares two identical chains in setup and parameters except for $\Sigma_{\text{prop}}$. Figure \ref{fig:pro_0-2} employs a normal distribution with a variance of $0.2$ for each input as its proposal distribution, while Figure \ref{fig:pro_2} uses a variance of $2$. Both scenarios allow the chain to proceed for $5000$ iterations. As observed in Figure \ref{fig:pro_0-2}, the chain is trapped in the vicinity of the global maximum at $(-2, -2)$, failing to transition to the other local maximum. In contrast, Figure \ref{fig:pro_2} demonstrates the chain's ability to traverse between the two `modes, capturing samples from both areas, although with a reduced count of unique samples. This exemplifies the classic dilemma of exploration versus exploitation inherent in Metropolis-Hastings and general MCMC algorithms. Theoretically, irrespective of the chosen value $\Sigma_{\text{prop}}$, as the number of MCMC iterations nears infinity, $N \rightarrow \infty$, the level set MCMC algorithm is expected to converge to the complete distribution and identify all 'modes'. In practice, $\Sigma_{\text{prop}}$ can influence the algorithm performance.

\begin{figure}[ht]
    \begin{subfigure}{.23\textwidth}
        \includegraphics[width=\linewidth]{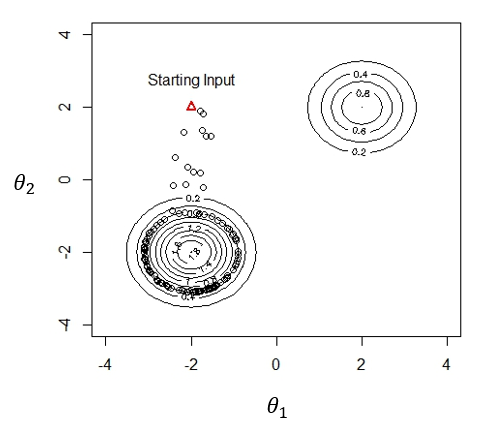}
        \caption{$\spro=\begin{bmatrix}
                            0.2 & 0\\
                            0 & 0.2
                        \end{bmatrix}$}
        \label{fig:pro_0-2}
    \end{subfigure}
    \begin{subfigure}{.23\textwidth}
        \includegraphics[width=\linewidth]{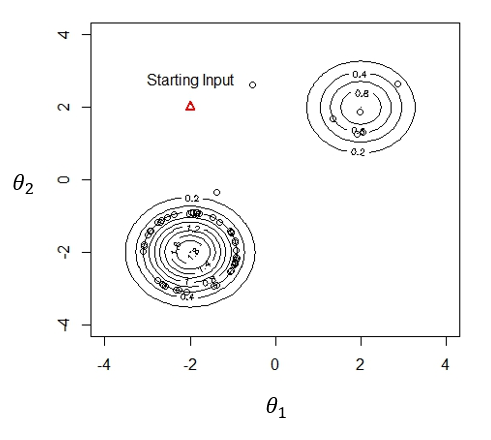}
        \caption{$\spro=\begin{bmatrix}
                            2 & 0\\
                            0 & 2
                        \end{bmatrix}$}
        \label{fig:pro_2}
    \end{subfigure}
    \caption{Comparison of input samples for different proposal distributions.}
    \label{fig:prop_diff}
\end{figure}

\subsection{Function with 100 Inputs} \label{ss:100inputs}

The level-set techniques introduced in \cite{osher2006level} target two or three-dimensional objective functions. These techniques utilize gradient data and objective function evaluations on a grid to approximate the required level sets. If the objective function lacks an analytical closed form, its gradients are approximated using finite difference methods. This approximation is effective in lower-dimensional scenarios, such as two or three dimensions, where estimating up to three partial derivatives suffices. However, in high-dimensional contexts like extensive simulations, estimating numerous partial derivatives, potentially in the hundreds, introduces substantial variability in level set predictions. To assess the Level-Set MCMC algorithm's performance in high-dimensional approximations, we designed a function with 100 variables.

The function utilizes a multivariate Gaussian kernel centered at $\mu=(1,2,\dots, 99, 100)$ to simplify visualization, where the center of the $i^{th}$ input corresponds to the integer $i$. The variance/covariance matrix is structured with $50$ on the diagonal, and specific covariances set as $Cov(x_1,x_2)=30$ and $Cov(x_5, x_10)=-40$, with all other covariances being zero. A factor of $10,000$ is multiplied by the kernel, setting the function's peak when the input matches the center $\mu$. The sole change to Algorithm \ref{alg:LSMCMC} involved altering the proposal mechanism for generating new values in the chain. Rather than a proposal affecting all inputs at each step, a sequential approach was implemented, where each input is individually proposed and then either accepted or rejected. This method of sequential proposals is frequently used in high-dimensional MCMC to increase the likelihood of achieving an acceptable region, as simultaneous modifications to all inputs generally result in lower acceptance rates.

\begin{figure}[ht]
    \begin{subfigure}{.23\textwidth}
        \includegraphics[width=\linewidth]{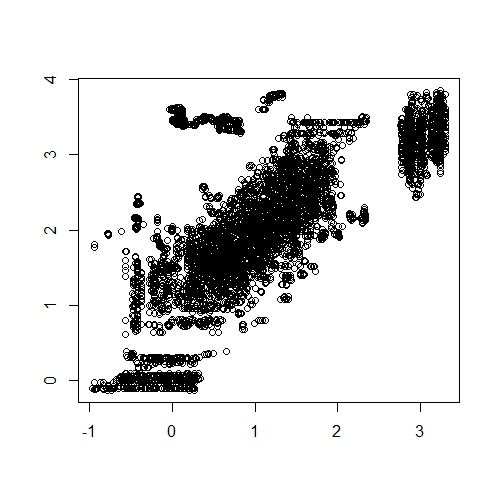}
        \caption{}
        \label{fig:high_d_1_2}
    \end{subfigure}
    \begin{subfigure}{.23\textwidth}
        \includegraphics[width=\linewidth]{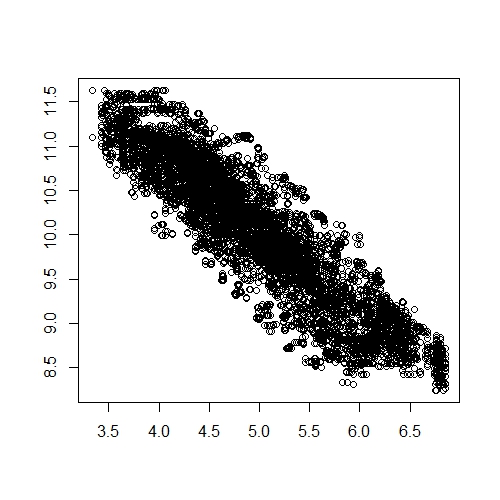}
        \caption{}
        \label{fig:high_d_5_10}
    \end{subfigure}
    \caption{Distribution of accepted input configurations for inputs 1 and 2, plot \ref{fig:high_d_1_2}, and for inputs 5 and 10, plot \ref{fig:high_d_5_10}, for the high dimensional normal function obtained using the level set MCMC Algorithm \ref{alg:LSMCMC}.}
    \label{fig:high_d_cor}
\end{figure}

\begin{figure}[ht]
    \begin{subfigure}{.23\textwidth}
        \includegraphics[width=\linewidth]{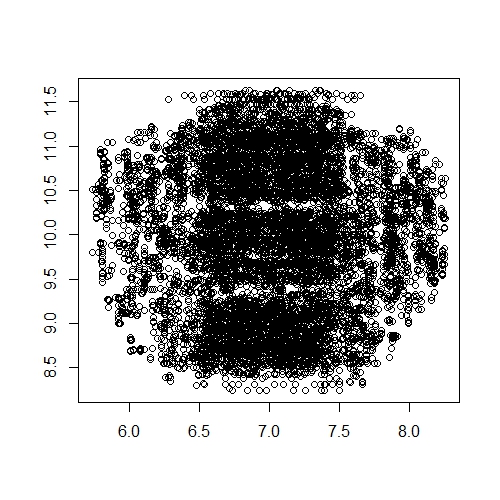}
        \caption{}
        \label{fig:high_d_7_10}
    \end{subfigure}
    \begin{subfigure}{.23\textwidth}
        \includegraphics[width=\linewidth]{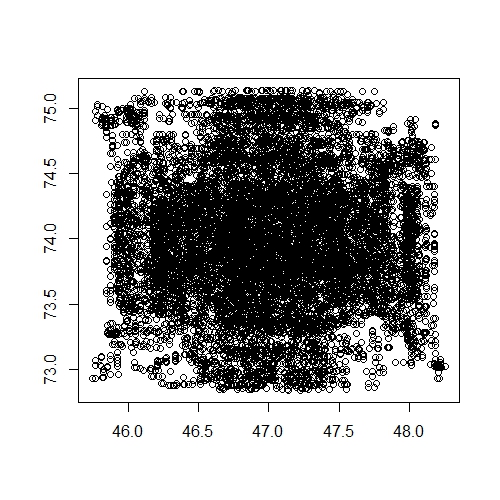}
        \caption{}
        \label{fig:high_d_47_74}
    \end{subfigure}
    \caption{Distribution of accepted input configurations for inputs 7 and 10, plot \ref{fig:high_d_7_10}, and inputs 47 and 74, plot \ref{fig:high_d_47_74}, for the high dimensional normal function obtained using the level set MCMC Algorithm \ref{alg:LSMCMC}.}
    \label{fig:high_d_independent}
\end{figure}

\begin{figure}[ht]
    \centering
    \includegraphics[width = .8\linewidth]{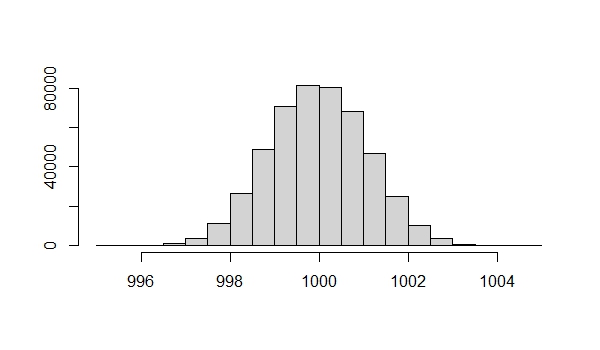}
    \caption{Histogram of the unique response values obtained using Algorithm \ref{alg:LSMCMC} with a target of $c=1,000$ and a tolerance of $\sigma_{tol}=1$.}
    \label{fig:hist_high_d_normal}
\end{figure}

To assess the convergence of Algorithm \ref{alg:LSMCMC} in high-dimensional scenarios, the algorithm was executed on the previously described normal function. A target of $c=1,000$ and a tolerance level of $\sigma_{tol}=1$ were established. The chain was initialized with a value that was randomly perturbed near the mean and a response of $9,200$ (which was arbitrarily chosen). For thorough exploration of the input space, the chain was run for $50,000$ iterations. Given that the proposal is evaluated for each input sequentially, this resulted in a total of $5,000,000$ function evaluations. Figure \ref{fig:high_d_cor} illustrates the expected positive correlation between inputs $1$ and $2$, and the negative correlation between inputs $5$ and $10$. Figure \ref{fig:high_d_independent} presents two examples of inputs with no correlation in the high-dimensional normal function. Furthermore, Figure \ref{fig:hist_high_d_normal} shows the histogram of the response values collected using Algorithm \ref{alg:LSMCMC} after a burn-in period of $1000$ iterations. The results demonstrate that the chain rapidly achieved the stationary target while simultaneously exploring the input space. This example demonstrates the effectiveness of the Level-Set MCMC Algorithm in high-dimensional contexts and is capable of generating thousands of input configurations that produce the required response from the objective function.

\section{Conclusions and Future Research}\label{sec:conclus}

As shown, both the level set MCMC algorithm with Smoothed ABC and the level set MCMC algorithm prove effective for identifying level sets for specified functions or simulation models. These algorithms were initially crafted with the challenge of input specification in mind, but they may also enable/enhance other applications. For example, the techniques described in \cite{osher2006level} employ grid points and a modification of the objective function known as a signed distance function to pinpoint the locations of the level set. However, our approaches can precisely determine a level set without relying on a fixed grid or signed distance function. As referenced in Section \ref{ss:Related Work}, evolving the level set through a velocity field is a typical use of level set methods. To advance the estimated level set via the grid approach, the numerical constraints necessitate adopting the \textit{Eulerian} formulation of the differential equation induced by the velocity field, expressed as $\phi_t+\Vec{V}\nabla\phi=0$. Here, Osher and Fedkiw denote the objective function by $\phi(x)$, with $\phi_t$ representing the time derivative of the objective function, and $\Vec{V}$ representing the velocity field. In contrast, our technique could utilize the more conceptually straightforward \textit{Lagrangian} formulation, described by $\frac{d\Vec{X}}{dt}=\Vec{V}(\Vec{x})$, where $\Vec{x}$ indicates points on the level set of $\phi(\Vec{x})$, and $\Vec{V}(\Vec{x})$ illustrates the velocity field as a function of these level set points. Osher and Fedkiw argue against the \textit{Lagrangian} formulation because even simple velocity fields can significantly distort boundary elements (like segments or triangles), potentially degrading the method's accuracy unless the discretization is periodically adjusted to mitigate these distortions through smoothing and regularizing imprecise surface elements. Nevertheless, since the level set MCMC algorithm can generate points exceedingly close to the desired level set, it is feasible to apply the velocity field to each point individually, bypassing the use of boundary elements such as segments or triangles; however, a thorough verification of this requires further research.

\bibliographystyle{IEEEtran}
\bibliography{level_set}

\begin{thebibliography}{10}
\providecommand{\url}[1]{#1}
\csname url@samestyle\endcsname
\providecommand{\newblock}{\relax}
\providecommand{\bibinfo}[2]{#2}
\providecommand{\BIBentrySTDinterwordspacing}{\spaceskip=0pt\relax}
\providecommand{\BIBentryALTinterwordstretchfactor}{4}
\providecommand{\BIBentryALTinterwordspacing}{\spaceskip=\fontdimen2\font plus
\BIBentryALTinterwordstretchfactor\fontdimen3\font minus \fontdimen4\font\relax}
\providecommand{\BIBforeignlanguage}[2]{{%
\expandafter\ifx\csname l@#1\endcsname\relax
\typeout{** WARNING: IEEEtran.bst: No hyphenation pattern has been}%
\typeout{** loaded for the language `#1'. Using the pattern for}%
\typeout{** the default language instead.}%
\else
\language=\csname l@#1\endcsname
\fi
#2}}
\providecommand{\BIBdecl}{\relax}
\BIBdecl

\bibitem{altinakar2012parallelized}
M.~S. Altinakar and M.~Z. McGrath, ``Parallelized two-dimensional dam-break flood analysis with dynamic data structures,'' in \emph{World Environmental and Water Resources Congress 2012: Crossing Boundaries}, 2012, pp. 1513--1522.

\bibitem{bryan2005active}
B.~Bryan, R.~C. Nichol, C.~R. Genovese, J.~Schneider, C.~J. Miller, and L.~Wasserman, ``Active learning for identifying function threshold boundaries,'' \emph{Advances in neural information processing systems}, vol.~18, 2005.

\bibitem{gotovos2013active}
A.~Gotovos, ``Active learning for level set estimation,'' Master's thesis, Eidgen{\"o}ssische Technische Hochschule Z{\"u}rich, Department of Computer Science, 2013.

\bibitem{mason2021nearly}
B.~Mason, R.~Camilleri, S.~Mukherjee, K.~Jamieson, R.~Nowak, and L.~Jain, ``Nearly optimal algorithms for level set estimation,'' \emph{arXiv preprint arXiv:2111.01768}, 2021.

\bibitem{ha2021high}
H.~Ha, S.~Gupta, S.~Rana, and S.~Venkatesh, ``High dimensional level set estimation with bayesian neural network,'' in \emph{Proceedings of the AAAI Conference on Artificial Intelligence}, vol.~35, no.~13, 2021, pp. 12\,095--12\,103.

\bibitem{osher1988fronts}
S.~Osher and J.~A. Sethian, ``Fronts propagating with curvature-dependent speed: Algorithms based on hamilton-jacobi formulations,'' \emph{Journal of computational physics}, vol.~79, no.~1, pp. 12--49, 1988.

\bibitem{osher2006level}
\BIBentryALTinterwordspacing
S.~Osher and R.~Fedkiw, \emph{Level Set Methods and Dynamic Implicit Surfaces}, ser. Applied Mathematical Sciences.\hskip 1em plus 0.5em minus 0.4em\relax Springer New York, 2006. [Online]. Available: \url{https://books.google.com/books?id=i4bfBwAAQBAJ}
\BIBentrySTDinterwordspacing

\bibitem{metropolis1953equation}
N.~Metropolis, A.~W. Rosenbluth, M.~N. Rosenbluth, A.~H. Teller, and E.~Teller, ``Equation of state calculations by fast computing machines,'' \emph{The Journal of Chemical Physics}, vol.~21, no.~6, pp. 1087--1092, 1953.

\bibitem{chib1995understanding}
S.~Chib and E.~Greenberg, ``Understanding the metropolis-hastings algorithm,'' \emph{The American Statistician}, vol.~49, no.~4, pp. 327--335, 1995.

\bibitem{pritchard1999population}
J.~K. Pritchard, M.~T. Seielstad, A.~Perez-Lezaun, and M.~W. Feldman, ``Population growth of human y chromosomes: a study of y chromosome microsatellites.'' \emph{Molecular biology and evolution}, vol.~16, no.~12, pp. 1791--1798, 1999.

\bibitem{beaumont2019approximate}
M.~A. Beaumont, ``Approximate bayesian computation,'' \emph{Annual review of statistics and its application}, vol.~6, pp. 379--403, 2019.

\bibitem{gramacy2020surrogates}
R.~B. Gramacy, \emph{Surrogates: Gaussian process modeling, design, and optimization for the applied sciences}.\hskip 1em plus 0.5em minus 0.4em\relax CRC press, 2020.

\bibitem{altinakar2020failure}
M.~Altinakar, M.~McGrath, V.~Ramalingam, J.~Demby, and G.~Inci, ``Two-dimensional modeling of the ka loko dam failure flood,'' \emph{USSD 2020 Annual Conference}, 2020.

\end{thebibliography}

\newpage

\appendix

\begin{center}
    {\bf Level Set Example}
  \end{center}
\section{Level Set Example}\label{app:level}

As an introduction to level sets, we briefly examine an artificial landscape known as the Goldstein-Price function given by:

\begin{align}\label{eq:gold_price}
    f(\dv_1,\dv_2)=&[1+(\dv_1+\dv_2+1)^2 \nonumber\\
    &\hspace{10pt} (19-14\dv_1+3\dv_1^2-14\dv_2+6\dv_1\dv_2+3\dv_2^2)] \nonumber \\
    &[30+(2\dv_1-3\dv_2)^2\nonumber \\
    &\hspace{10pt}(18-32\dv_1+12\dv_1^2+48\dv_2-36\dv_1\dv_2+27\dv_2^2)].
\end{align}

The Goldstein-Price function, characterized by two continuous inputs and a continuous output, attains a global minimum of $3$ at the coordinates $(0,-1)$. Figure \ref{fig:gold_price} presents a contour plot of this function. Observant readers will recognize that contour plots are essentially collections of numerous level curves represented within a two-dimensional framework. The levels specified for this illustration are set at ${c=5, 10, 50, 100, 500, 1000, 1500, 1e^4, 5e^4, 1e^5, 5e^5, 1e^6}$. The process of obtaining level sets for a bivariate function is relatively straightforward, see \cite{osher2006level}, with numerous computational packages available that facilitate the generation of plots similar to Figure \ref{fig:gold_price}. This particular plot was generated in R by evaluating the Goldstein-Price function on a grid comprising $1,681 (\dv_1,\dv_2)$ points. These computational methodologies for generating contour plots typically employ or are analogous to the techniques for determining level sets as described in \cite{osher2006level}, which utilize a grid coupled with a mathematical framework known as a signed distance function to determine and approximate level sets. However, employing the same methodology to acquire multiple level sets for a function with three continuous inputs necessitates $68,921$ evaluations of the function, presupposing uniform spacing between grid points and a comparable-sized bounding box surrounding the inputs. Inevitably, the curse of dimensionality implies that this figure will grow exponentially, as a function with four dimensions would require $2,825,761$ evaluations to achieve comparable precision in level set production. Such a voluminous number of evaluations may prove unfeasible for functions that are challenging to evaluate, such as extensive simulations. Consequently, while brute-force approaches are effective for functions with fewer inputs, they become impractical as the number of inputs expands. Therefore, to address the issue of input specification, it was imperative to devise a methodology that is effective within the expansive dimensional spaces typical of most simulations. 

\begin{figure}
    \centering
    \includegraphics[width=.47\textwidth]{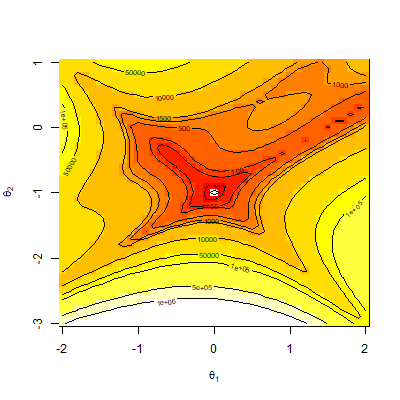}
    \caption{Level sets for the Goldstein-Price function at\\ $c=5, 10, 50, 100, 500, 1000, 1500, 1e^4, 5e^4, 1e^5, 5e^5, 1e^6$.}
    \label{fig:gold_price}
\end{figure}

\begin{center}
    {\bf Proof of Convergence in Distribution}
  \end{center}
\section{Proof of Convergence in Distribution}\label{app:Proofs}

  The following lemmas are instrumental in showing the convergence of the samples derived from Algorithm \ref{alg:LSMCMC} to the specified level set. It is useful to observe that a generalization of Algorithm \ref{alg:LSMCMC} is shown in Algorithm \ref{alg:MCMC}. In this generalization, references to the GP Surrogate $G(\cdot|\cdot)$ are omitted, and the transition distribution is incorporated into the acceptance ratio, which frames our algorithm within the conventional Metropolis-Hastings framework.

\begin{algorithm}
\caption{Generalized MCMC Algorithm}\label{alg:MCMC}
\begin{algorithmic}[1]
\State Allocate Matrix of Inputs, $\DV$
\State Initialize $\DV[1]$ to some value
\For{$n=1, \dots, N$}
    \State Sample $\Tilde{\dv} \sim N(\DV[n-1], \Sigma_{pro})$
    \State Set 
    \begin{dmath*}
       \alpha = \text{min} \left\{1, \dfrac{N(\DV[n-1]|\mu=\Tilde{X}, \spro)}{N(\Tilde{\dv}|\mu=\DV[n-1], \spro)}\times \\
       \dfrac{N(\Tilde{\dv}|\mu=c, \Sigma_{tol})}{N(\DV[n-1]|\mu=c, \Sigma_{tol})}\right\}
    \end{dmath*}
    \State Sample $u\sim U(0,1)$ 
    \If{$u < \alpha$}
        \State $\DV[n] \gets \Tilde{\dv}$
    \Else
        \State $\DV[n] \gets \DV[n-1]$
    \EndIf
\EndFor 
\State \Return $\DV$
\end{algorithmic}
\end{algorithm}

\begin{lemma}\label{lem:detail}
    The stationary distribution for Algorithm \ref{alg:MCMC} is $N(c,\stol)$.
\end{lemma}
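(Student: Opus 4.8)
The plan is to recognize Algorithm \ref{alg:MCMC} as a particular instance of the general Metropolis-Hastings scheme (Algorithm \ref{alg:M-H}), with target density $p(\dv) = N(\dv | \mu = c, \stol)$ and symmetric Gaussian proposal $q(\tilde{\dv} | \dv) = N(\tilde{\dv} | \mu = \dv, \spro)$, and then to invoke the detailed-balance argument already cited in Section \ref{ss:MCMC} to conclude that $p$ is the stationary distribution.

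First I would match the acceptance ratio in line 5 of Algorithm \ref{alg:MCMC} term by term against the M-H ratio in line 5 of Algorithm \ref{alg:M-H}. The factor $\frac{N(\DV[n-1]|\mu=\tilde{\dv}, \spro)}{N(\tilde{\dv}|\mu=\DV[n-1], \spro)}$ is exactly the proposal ratio $q(\DV[n-1]|\tilde{\dv})/q(\tilde{\dv}|\DV[n-1])$ (here $\tilde{X}$ in the printed display denotes $\tilde{\dv}$), and $\frac{N(\tilde{\dv}|\mu=c, \stol)}{N(\DV[n-1]|\mu=c, \stol)}$ is the target ratio $p(\tilde{\dv})/p(\DV[n-1])$ for $p = N(c, \stol)$. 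Their product is precisely $\frac{p(\tilde{\dv})\,q(\DV[n-1]|\tilde{\dv})}{p(\DV[n-1])\,q(\tilde{\dv}|\DV[n-1])}$, establishing the identification.

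Next, for completeness, I would verify detailed balance directly. Writing the off-diagonal transition kernel as $K(\tilde{\dv}|\dv) = q(\tilde{\dv}|\dv)\,\alpha(\dv,\tilde{\dv})$ for $\tilde{\dv}\neq\dv$, the product $p(\dv)\,K(\tilde{\dv}|\dv)$ simplifies to $\min\{p(\dv)q(\tilde{\dv}|\dv),\, p(\tilde{\dv})q(\dv|\tilde{\dv})\}$, which is symmetric under swapping $\dv$ and $\tilde{\dv}$; hence $p(\dv)K(\tilde{\dv}|\dv) = p(\tilde{\dv})K(\dv|\tilde{\dv})$, and integrating in $\dv$ shows that $p = N(c, \stol)$ is invariant. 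I would also note that the Gaussian proposal places positive density everywhere on the domain, so its support contains the support of $N(c, \stol)$ and the hypotheses quoted in Section \ref{ss:MCMC} are satisfied.

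I do not anticipate a genuine obstacle: once the correspondence is made explicit the claim is essentially definitional, and the only point requiring care is the bookkeeping that matches the two density ratios to the proposal and target factors. Indeed, the symmetry of the Gaussian proposal already forces the proposal ratio to equal one, so retaining it in general form merely makes the M-H correspondence transparent and keeps the argument valid for non-symmetric proposals.
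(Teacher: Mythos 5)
Your proposal is correct and follows essentially the same route as the paper: identify Algorithm \ref{alg:MCMC} as a Metropolis-Hastings instance with proposal $N(\cdot\,|\,\mu=\cdot,\spro)$ and target $\pi(\cdot)=N(\cdot\,|\,\mu=c,\stol)$, and conclude stationarity from detailed balance. Your explicit verification of detailed balance via the symmetry of $\min\{p(\dv)q(\tilde{\dv}|\dv),\,p(\tilde{\dv})q(\dv|\tilde{\dv})\}$ and the support check are simply more detailed versions of what the paper cites from the M-H literature.
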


\begin{IEEEproof}
    Algorithm \ref{alg:MCMC} is a specific instance of the Metropolis-Hastings algorithm, which is known to satisfy the detail-balanced equation given below, where $q(\cdot|\cdot)$ is the transition distribution, $\pi(\cdot)$ is the stationary distribution, and $\alpha(\Tilde{\dv}, \dv)=\frac{\pi(\dv)q(\Tilde{\dv}|\dv)}{\pi(\Tilde{\dv})q(\dv|\Tilde{\dv})}$ is the acceptance probability used to ensure detail balance.  
    \begin{equation*} 
q(\dv|\Tilde{\dv})\pi(\Tilde{\dv})\alpha(\Tilde{\dv},\dv)=q(\Tilde{\dv}|\dv)\pi(\dv)\alpha(\dv, \Tilde{\dv}).
    \end{equation*}  
    For Algorithm \ref{alg:MCMC} we have that ${q(\cdot|\cdot)=N(\cdot|\mu=\cdot, \spro)}$ and ${\pi(\cdot)=N(\cdot|\mu=c, \stol)}$ which is the desired stationary distribution.
\end{IEEEproof}

\begin{lemma}\label{lem:delta}
    The univariate Gaussian distribution, $N(\mu,\sigma^2)$, converges in distribution to the Dirac delta function, $\delta(\dv-\mu)$ as $\sigma^2\rightarrow 0$.
\end{lemma}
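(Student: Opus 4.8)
The statement is the standard fact that a Gaussian concentrates to a point mass at its mean as its variance vanishes, where ``Dirac delta function'' $\delta(\dv-\mu)$ is understood as the point mass at $\mu$ and ``convergence in distribution'' as weak convergence of the associated probability measures. The cleanest route is therefore to verify the defining \emph{sifting} property of the delta against test functions: the plan is to show that for every bounded continuous function $g$,
\begin{equation*}
\int_{-\infty}^{\infty} g(\dv)\, N(\dv \mid \mu, \sigma^2)\, d\dv \;\longrightarrow\; g(\mu) \qquad \text{as } \sigma^2 \to 0,
\end{equation*}
since $\int g(\dv)\,\delta(\dv-\mu)\,d\dv = g(\mu)$ is exactly how the delta acts.

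The key computation is a standardizing change of variables. First I would substitute $u = (\dv - \mu)/\sigma$, so that $\dv = \mu + \sigma u$ and $d\dv = \sigma\,du$, which rewrites the integral as $\int_{-\infty}^{\infty} g(\mu + \sigma u)\,\phi(u)\,du$, where $\phi$ is the standard normal density. The $\sigma$ from the Jacobian cancels the $1/\sigma$ normalizing factor, leaving the standard Gaussian weight independent of $\sigma$. Next I would observe that, by continuity of $g$, the integrand converges pointwise to $g(\mu)\,\phi(u)$ as $\sigma \to 0$, and that it is dominated by $\|g\|_\infty\,\phi(u)$, an integrable function. An appeal to the dominated convergence theorem then lets me pass the limit inside the integral, yielding $g(\mu)\int \phi(u)\,du = g(\mu)$ because $\phi$ integrates to one.

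As an independent cross-check I would also verify the claim directly at the level of cumulative distribution functions: for any fixed $x \neq \mu$ the CDF of $N(\mu,\sigma^2)$ equals $\Phi\!\big((x-\mu)/\sigma\big)$, and since $(x-\mu)/\sigma \to -\infty$ when $x < \mu$ and $\to +\infty$ when $x > \mu$, this tends to $0$ and $1$ respectively. That is precisely the Heaviside CDF of the point mass at $\mu$, and the lone discontinuity at $x = \mu$ need not converge — which is exactly what the definition of convergence in distribution permits. A third, slickest alternative is the characteristic-function argument: the characteristic function $\exp\!\big(i\mu t - \tfrac{1}{2}\sigma^2 t^2\big)$ converges to $\exp(i\mu t)$ for every $t$, so Lévy's continuity theorem gives the weak limit immediately.

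The main obstacle here is conceptual rather than computational: one must be careful that the convergence is \emph{weak} (against test functions, or equivalently at continuity points of the limiting CDF) and not pointwise convergence of densities — indeed the Gaussian density at $\dv = \mu$ diverges to $+\infty$ as $\sigma^2 \to 0$, so no ordinary function limit exists. Making the dominated-convergence interchange rigorous, and stating plainly which notion of convergence is meant, is where the care is needed; the underlying estimates are routine.
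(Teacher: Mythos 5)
Your main argument is essentially the paper's own proof: both verify the sifting property $\int g(\dv)\,N(\dv\mid\mu,\sigma^2)\,d\dv \to g(\mu)$ via the standardizing substitution $y=(\dv-\mu)/\sigma$ and the dominated convergence theorem, differing only in that you test against bounded continuous functions while the paper uses continuous compactly supported ones. Your write-up is correct (and in fact handles the cancellation of the $1/\sigma$ factor and the passage of the limit more carefully than the paper's displayed computation); the CDF and characteristic-function cross-checks are valid but supplementary.
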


\begin{proof}
    This is equivalent to showing that
    \begin{equation*}
        \lim_{\sigma\rightarrow0^+}\dfrac{1}{\sigma\sqrt{2\pi}}e^{\frac{-(\dv-\mu)^2}{2\sigma^2}}=\delta(\dv-\mu).
    \end{equation*}
    The property that defines $\delta(\dv-\mu)$ is ${\int_{-\infty}^{\infty}\delta(\dv-\mu)g(\dv)d\dv=g(\mu)}$ for all continuous compactly supported functions $g(\dv)$. Let $g(\dv)$ be a continuous compactly supported function, then we have the following:
    \begin{align*}
        \int_{-\infty}^{\infty}&\lim_{\sigma\rightarrow0^+}\dfrac{1}{\sigma\sqrt{2\pi}}e^{\frac{-(\dv-\mu)^2}{2\sigma^2}}g(\dv) d\dv \\
        &= \int_{-\infty}^{\infty}\lim_{\sigma\rightarrow0^+}\dfrac{1}{\sigma\sqrt{2\pi}}e^{\frac{-y^2}{2}}g(\sigma y + \mu) dy, && y=\frac{\dv-\mu}{\sigma} \\
        &= g(\mu)\int_{-\infty}^{\infty}\lim_{\sigma\rightarrow0^+}\dfrac{1}{\sigma\sqrt{2\pi}}e^{\frac{-y^2}{2}} dy \\
        &= g(\mu)\lim_{\sigma\rightarrow0^+}\int_{-\infty}^{\infty}\dfrac{1}{\sigma\sqrt{2\pi}}e^{\frac{-y^2}{2}} dy \\
        &= g(\mu)\cdot 1 =g(\mu).
    \end{align*}
    Note that we can justify moving $\lim$ outside of the integral through the Dominated Convergence Theorem. Thus, $\lim_{\sigma\rightarrow0^+}\dfrac{1}{\sigma\sqrt{2\pi}}e^{\frac{-(\dv-\mu)^2}{2\sigma^2}}$ converges to the Dirac delta function, $\delta(\dv-\mu)$, in distribution, and the lemma is proven.
\end{proof}

\begin{theorem}\label{theorem}
    As all diagonal elements of $\stol$ approach $0$ the stationary distribution for Algorithm \ref{alg:LSMCMC} becomes the level set distribution $\pi_{\Vec{c}}(\cdot)$.
\end{theorem}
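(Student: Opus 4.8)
The plan is to first pin down the stationary law of Algorithm \ref{alg:LSMCMC} for a fixed tolerance, and then analyze its behavior as the diagonal of $\stol$ shrinks to zero. For the first step I would note that Algorithm \ref{alg:LSMCMC} is itself an instance of Metropolis--Hastings with the \emph{symmetric} proposal $N(\DV[n-1],\spro)$, so the proposal densities cancel in the acceptance ratio exactly as in the proof of Lemma \ref{lem:detail}. What remains in the ratio is $N(G_\mu(\Tilde{\dv})|c,\stol)\,/\,N(G_\mu(\DV[n-1])|c,\stol)$, and matching this against the generic Metropolis--Hastings ratio $\pi(\Tilde{\dv})/\pi(\dv)$ identifies the stationary distribution as
\[
\pi_{\stol}(\dv)\ \propto\ N\!\left(G_\mu(\dv)\,\middle|\,\mu=\vec{c},\,\stol\right),\qquad \dv\in D,
\]
where $D$ is the support of the GP (i.e.\ the prior region). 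Intuitively this is the law that weights each input $\dv$ by how close its surrogate mean response $G_\mu(\dv)$ sits to the target $\vec{c}$.

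The second step is to send the diagonal entries of $\stol$ to $0$. By Lemma \ref{lem:delta}, applied coordinate-by-coordinate (legitimate because $\stol$ is diagonal, so the multivariate density factors into univariate Gaussians), $N(\cdot|\vec{c},\stol)$ concentrates into a Dirac mass at $\vec{c}$ in \emph{response} space. The real work is transporting this concentration into the $n$-dimensional \emph{input} space, for which I would test $\pi_{\stol}$ against a bounded continuous function $h$ and invoke the smooth coarea formula for the map $G_\mu:\mathbb{R}^n\to\mathbb{R}^m$:
\[
\int_D h(\dv)\,N(G_\mu(\dv)|\vec{c},\stol)\,d\dv
=\int_{\mathbb{R}^m} N(t|\vec{c},\stol)\left(\int_{G_\mu^{-1}(t)\cap D}\frac{h(\dv)}{J_{G_\mu}(\dv)}\,d\mathcal{H}^{n-m}(\dv)\right)dt,
\]
where $J_{G_\mu}=\sqrt{\det\!\left(D G_\mu\,D G_\mu^{\top}\right)}$ is the Jacobian factor. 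Letting $\stol\to0$ collapses the outer integral onto $t=\vec{c}$, yielding $\int_{A\cap D} h/J_{G_\mu}\,d\mathcal{H}^{n-m}$ with $A=\{\dv:G_\mu(\dv)=\vec{c}\}$; running the same computation with $h\equiv 1$ shows the normalizing constant converges to a finite positive $Z_0=\int_{A\cap D}J_{G_\mu}^{-1}\,d\mathcal{H}^{n-m}$. Hence $\pi_{\stol}$ converges weakly to the probability measure carried on $A$ with density $Z_0^{-1}J_{G_\mu}(\dv)^{-1}$ against surface measure, which is precisely the level set distribution $\pi_{\vec{c}}(\cdot)$.

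I expect the main obstacle to be exactly this transport from the $m$-dimensional response space to the $n$-dimensional input space: Lemma \ref{lem:delta} alone only gives concentration of the outer factor, and it is the coarea formula that does the geometric heavy lifting. Making it rigorous forces me to state the regularity hypotheses explicitly, namely that $G_\mu$ is $C^1$ (automatic for a GP posterior mean with a smooth kernel), that $\vec{c}$ is a regular value so $J_{G_\mu}>0$ on $A$ and $A$ is a smooth $(n-m)$-manifold, and that $A\cap D$ is compact with finite $(n-m)$-dimensional measure so $Z_0$ is finite and nonzero. Under these, the inner integral is continuous in $t$ near $\vec{c}$, which is what justifies evaluating the limit at $t=\vec{c}$ and interchanging limit and integral by dominated convergence. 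I would emphasize that this argument does double duty: beyond proving the theorem, it \emph{identifies} the previously under-specified measure $\pi_{\vec{c}}$, pinning its density down as inversely proportional to the surrogate's gradient Jacobian along the level set.
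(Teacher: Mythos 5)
Your proposal is correct under the regularity hypotheses you state, but it takes a genuinely different route from the paper. The paper's proof is brief and informal: since $\stol$ is diagonal, the target Gaussian factors into univariate Gaussians, Lemma \ref{lem:delta} sends each factor to a Dirac delta, and the proof then simply asserts that the support of the resulting product, read in input space, is the intersection of the single-response level sets, hence the support of $\pi_{\vec{c}}(\cdot)$; the fixed-tolerance stationary law is addressed only through Lemma \ref{lem:detail}, which concerns the generalized Algorithm \ref{alg:MCMC}, whose Gaussian target is placed directly on the input space rather than composed with the surrogate mean. You instead work with Algorithm \ref{alg:LSMCMC} itself: you identify its stationary law as proportional to $N(G_\mu(\dv)\mid\vec{c},\stol)$ on the prior region, and then carry out explicitly the step the paper only gestures at --- transporting the response-space concentration into input space --- via the coarea formula, obtaining weak convergence to the measure on $A=\{\dv:G_\mu(\dv)=\vec{c}\}$ with density proportional to $J_{G_\mu}^{-1}$ against $(n-m)$-dimensional Hausdorff measure. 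The paper's route buys brevity and suffices because $\pi_{\vec{c}}$ is defined only as ``some measure supported on the level set''; your route costs extra hypotheses ($C^1$ surrogate mean, $\vec{c}$ a regular value, $A\cap D$ compact with finite positive $\mathcal{H}^{n-m}$ measure) but correctly handles the composition with $G_\mu$ and pins down \emph{which} measure on the level set arises in the limit, information the paper leaves unspecified. One small point worth stating explicitly in your write-up: the level set recovered is that of the surrogate mean $G_\mu$, not of the underlying simulator $f$, a caveat that is implicit in the paper as well.
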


\begin{proof}
    Given that we are assuming $\stol$ is a diagonal matrix, we find that the multivariate normal distribution is equivalent to the product of independent univariate normal distributions.  By Lemma \ref{lem:delta}, these univariate normal distributions converge to the product of the Dirac delta distributions.  Each Dirac delta function has a support over the input space that is equivalent to the level set for the individual response in $\Vec{c}$.  The support for the product of Dirac delta functions is the intersection of each of these support sets or level sets.  Thus, the stationary distribution for Algorithm \ref{alg:LSMCMC} as the diagonal elements of the $\stol$ approach $0$ is the simultaneous level set distribution for the vector of response values $\Vec{c}$ or $\pi_{\Vec{c}}(\cdot)$.
\end{proof}

\end{document}